\documentclass[a4paper,12pt]{article}

\usepackage[a4paper,
textwidth = 16cm,
textheight = 22.5cm,
top = 3cm]{geometry}

\usepackage{amsmath,amsthm,amssymb,bbm,graphicx,xcolor,natbib,url,enumitem}
\usepackage{float}
\usepackage[ruled, vlined, linesnumbered]{algorithm2e}
\usepackage[colorlinks,citecolor=blue,urlcolor=blue,breaklinks]{hyperref}

\DeclareMathOperator*{\argmax}{arg\,max}
\DeclareMathOperator*{\argmin}{arg\,min}
 % Tengyao's edits
\let\leq=\leqslant   % for nice-looking inequality signs
\let\geq=\geqslant
\newtheorem{theorem}{Theorem}
\newtheorem{prop}[theorem]{Proposition}

\newtheorem{corollary}[theorem]{Corollary}
\title{SpreadDetect: Detection of spreading change in a network over time}
\author{Hanqing Cai and Tengyao Wang\footnote{Research supported by EPSRC grant EP/T02772X/1.}\\
University College London and London School of Economics}
\date{\today}

%\linespread{1} 

\newcommand{\RN}[1]{%
  \textup{\uppercase\expandafter{\romannumeral#1}}%
}
\begin{document}
\maketitle
\begin{abstract}
Change-point analysis has been successfully applied to the detect changes in multivariate data streams over time. In many applications, when data are observed over a graph/network, change does not occur simultaneously but instead spread from an initial source coordinate to the neighbouring coordinates over time. We propose a new method, SpreadDetect, that estimates both the source coordinate and the initial timepoint of change in such a setting. We prove that under approriate conditions, the SpreadDetect algorithm consistently estimates both the source coordinate and the timepoint of change and that the minimal signal size detectable by the algorithm is minimax optimal. The practical utility of the algorithm is demonstrated through numerical experiments and a COVID-19 real dataset. 
\end{abstract}

\section{Introduction}
The advance of technology has allowed us to collect vast amount of time-ordered data. A common phenomenon in such datasets is that the data generating mechanism may change over time.  Examples include climate data that tracks the amount of greenhouse gases in the atmosphere \citep{Reeves2007ARA}; \citep{IJ2010} functional Magnetic Resonance Imaging (fMRI) scans that record the time evolution of blood oxygen level dependent (BOLD) chemical contrast in different areas of the brain \citep{aston2012};\citep{Bosc2003} and virtually  simultaneous market shocks in financial  data  stream \citep{ChenGupta1997}. The presence of changes renders traditional statistical techniques that rely on the homogeneity of the dataset to be inapplicable. A popular way of handling the inhomogeneity caused by changes in such datasets is through the technique of change-point analysis, where we detect and localise timepoints of change so as to segment the original data series into shorter segments that are more or less stationary. 

Classically, change-point analysis focused primarily on the setting of univariate time series, with some state-of-art methods proposed in \citet{KFE2012, FMS2014, Fryzlewicz2014}. However, these methods are often inadequate for modern, high-dimensional data-sets, where signals may be spread across multiple coordinates. Recently,   
several methodologies have been proposed to test and estimate change-points in the high-dimensional settings by borrowing strength across multiple coordinates to detect and localise change-points at a higher accuracy than would otherwise be possible using univariate change-point algorithms alone. These methods include $\ell_2$ or $\ell_\infty$ aggregation of the cumulative sums (CUSUMs) test statistics across different components proposed by  \citet{HorvathHuskova2012, Jirak2015}, the Sparsified Binary Segmentation algorithm by \citet{ChoFryzlewicz2015}, the double CUSUM algorithm of \citet{Cho2016} and a projection-based approach by \citet{wang2016highdimensional}. 
 
However, in order to handle the high-dimensional nature of the problem, the multivariate or high-dimensional methods mentioned in the previous paragraph often make simplifying assumptions such as all coordinates are exchangeable or that changes are located in a sparse subset of coordinates. In reality, in many applications, there are additional structures in the change-points that one can exploit to improve the estimation accuracy. Examples include group structures where coordinates form natural groups and changes tend to occur within the same group \citep{wang2021statistically,CW2023}, and community structures where nodes belong to different (unknown) communities and may switch community at the change-point \citep{wang2021optimal}. In the present work, we consider the structure where the coordinates represents nodes of a graph/network and the change, instead of occurring simultaneously in all coordinates of interest, may initially appear in one coordinate (the \emph{source coordinate} of change), and then spread across the network gradually over time. Such a statistical model is useful to represent, for instance, the spread of infectious disease between individuals over time. We are interested to estimate both the source coordinate and the timepoint where the change occurs at the source coordinate. Note that different coordinate will have a change occurring at a different timepoint. To avoid ambiguity, we refer to the time of the change in the source coordinate as the \emph{initial change-point}, or simply the \emph{change-point} of the model, and the timepoint of change in any given coordinate as the \emph{time of spread} to that coordinate, which is typically later than the change-point. In such as setting, the change signal may be very small and sparse when first appears, and increases as the change is spread across the network. Thus, a naive application of a multivariate change-point procedure may miss the initial part of the change and likely estimate a change-point with a positive bias. Moreover, in many applications, the coordinate(s) where the change first appears may be of separate interest. This calls for a new methodology that can handle the spreading nature of the change.

In this paper, we proposed a method, called SpreadDetect, that deals with the task of estimating both the source coordinate and the initial change-point time in a statistical model where the change is spread across the network via adjacent nodes. The key idea here, is to aggregate evidence of change, measured in terms of coordinatewise CUSUM statistics, across multiple coordinates with suitable time lags. We then centre these aggregated CUSUM statistics so that under the null distribution, candidate change-points near and far away from the boundary of the time window considered are treated on equal footings. The method is explained in detail in Section~\ref{sec:method}. Depending on whether the signs of the change in different coordinates are equal, we propose quadratic and linear test statistics respectively, indexed both in time and over the coordinates. The final estimator for the time and coordinate of initial change is obtained by maximising these aggregated statistics. 

In Section~\ref{Sec:Theory}, we derive theoretical guarantees of our proposed SpreadDetect method. For simplicity, we focus on the case where the change is spreading across the network at a deterministic rate. We assume that if the change-point and source coordinate pair varies from $(z^*, j^*)$ to $(t^*, k^*)$, at least $m$ nodes in the network will witness a difference in their time of spread at least proportional to the sum of the time difference between $z^*$ and $t^*$ and the graph distance between $j^*$ and $k^*$. 
 We first derive a key result in Theorem~\ref{Thm:GeneralConsistency}, saying that assuming that the change is bounded away from the endpoint, and provided the magnitude of change is up to logarithmic factors above $\sqrt{p}/(n  m) + p/(n  m^2)$,  then both the source coordinate and initial change-point time can be accurately estimated. Theorem~\ref{thm:testing} then shows that our estimation procedure can be turned into a test with good size and power controls for testing the existence of a change-point of the above signal size. 
 Theorem~\ref{Thm:LowerBound} shows that when $m \asymp p$ (a condition that can be verified in many common graphs),  the signal size required in Theorem~\ref{Thm:GeneralConsistency} is in fact minimax optimal. In addition, we derive in Theorem~\ref{Thm:LinearStatConsistency} the result for the special case when we know the sign of the signal so that the linear statistics in Algorithm~\ref{alg:split} is used.  In this case, the estimation accuracy is guaranteed if the magnitude of change is above $1/\sqrt{mn\tau^2}$ up to logarithmic factor. 

In Section~\ref{Sec:Simulations}, we evaluate the empirical performance of method through simulated data and a COVID-19 real data example. We evaluate our method under two settings when the signal spread to the nearby coordinates in a fixed or random way using the simulated data. Proofs of all theoretical results are deferred to Section~\ref{Sec:Proofs}, and ancillary results and their proofs are given in Appendix~\ref{Sec:Ancillary}.
We conclude our introduction with some notation used throughout the paper.

\subsection{Notation}
For $n\in\mathbb{N}$, we write $[n] = \{1, \dots, n\}$. For a vector  $\|v\|_q=\bigl\{\sum_{i=1}^{n} (v_i)^q\bigr\}^{1/q}$ for any positive integer $q$. We denote $j = \lceil p \rceil$ if $j$ is the smallest integer such that $j\geq p$ and denote $j = \lfloor p \rfloor$ if $j$ is the largest integer such that $j\leq p$.

Given two sequences $(a_n)_{n\in\mathbb{N}}$ and $(b_n)_{n\in\mathbb{N}}$ such that $a_n, b_n > 0$ for all $n$, we write $a_n \lesssim b_n$ (or equivalently $b_n \gtrsim a_n$) if $a_n \leq C b_n$ for some universal constant~$C$. We write $a_n \asymp b_n$ if $0<\liminf_{n\rightarrow \infty} |a_n/b_n|\leq \limsup_{n\rightarrow \infty} |a_n/b_n|<\infty$.

%\textcolor{red}{
%Sometimes, it may be useful to consider the case when coordinates are connected on a graph. Meanwhile, if a change starts at a specific coordinate at some time, the coordinates next to it on the graph will change at the next time point and spread out in this way. For example, in the spreading of COVID-19, if a person is infected by the virus, and then, in the next time point, people who are closest to this person may be infected and continue in this way. We are interested to find out both the starting time and the starting coordinates of the change. In this work, we present the change point estimation procedure which exploits this spreading change by aggregating the CUSUM statistics along the change direction of the signal. }

\section{Problem setup and methodology}
\label{sec:method}
Given a network represented by a connected graph $G$, with vertices $V(G) := [p]$ and edges $E(G) \subseteq [p]\times [p]$,  let $j^*\in V(G)$ be the source coordinate and $z^*\in[n]$ the change-point and write $S_t\subseteq [p]$ for the set of ``infected nodes'', i.e., coordinates that have undergone a change at or before time $t$. We have  $S_t = \emptyset$ for $t < z^*$, $S_{z^*} = \{j^*\}$ and we assume that the change spreads from infected nodes to their neighbours at a constant rate in the sense that at any time $t > z^*$, $S_t := \{j: \text{$(j,k) \in E(G)$ for some $k\in S_{t-1}$}\}$. Suppose the data $X_1,\ldots,X_n \in \mathbb{R}^{V(G)} \cong \mathbb{R}^p$ follow multivariate normal distribution with an identity covariance such that  
\[
\mathbb{E}(X_t) = \mu^0 \circ \mathbf{1}_{S_t^{\mathrm{c}}} + \mu^1 \circ \mathbf{1}_{S_t}, \quad \text{for $t\in [n]$},
\]
where $\circ$ denotes the Hadamard product, $\mu^0$ and $\mu^1$ are respectively vectors of means pre- and post-change, and $\mathbf{1}_{A} := (\mathbbm{1}_{j\in A})_{j\in[p]}$ for any $A\subseteq [p]$.

Let $d_G(j,k)$ be the graph distance between nodes $j$ and $k$, i.e., the length of the shortest path from $j$ to $k$ on graph $G$. Then, the data consist of independent random variables
\[
X_{j,t}\sim
\begin{cases}
  N(\mu_j^0,1)& \text{ if $t\leq z^*+d_G(j,j^*)$} \\ N(\mu_j^1,1)& \text{ if $t>z^*+d_G(j,j^*)$,}
  \end{cases} \quad \text{for $j\in[p]$ and $t\in[n]$}.
\]
We define $P_{j^*,z^*,\mu^0,\mu^1}$ to be the distribution of the data matrix $X = (X_1,\ldots,X_n)\in\mathbb{R}^{p\times n}$ given parameters $(j^*, z^*, \mu^0, \mu^1)\in\Theta := [p]\times [n]\times \mathbb{R}^p\times \mathbb{R}^p$. Our task is to estimate $j^*$ and $z^*$ given data $X \sim P_{j^*,z^*,\mu^0,\mu^1}$. Define $\theta = (\theta_1,\ldots,\theta_p)^\top := \mu^1 - \mu^0$. We assume that $|\theta_j|\geq a$ for some $a>0$ for all $j\in[p]$. 

Writing $\mu = \mathbb{E}X\in\mathbb{R}^{p\times n}$, we have the decomposition $X=\mu +W$, where $W$ is a $p\times n$ random matrix with independent $N(0,1)$ entries. Define $\mathcal{T}: \mathbb{R}^{p\times n}\to \mathbb{R}^{p\times (n-1)}$ to be the matrix CUSUM transformation:
\begin{equation}
\label{Eq:CUSUM}
  [\mathcal{T}(M)]_{j,t}=\sqrt{\frac{t(n-t)}{n}}\bigg(\frac{1}{n-t}\sum_{r=t+1}^nM_{j,r}-\sum_{r=1}^t\frac{1}{t}M_{j,r}\bigg), %=\sqrt{\frac{n}{t(n-t)}}\bigg(\frac{t}{n}\sum_{r=1}^n X_{j,r}-\sum_{r=1}^t X_{j,r}\bigg).
\end{equation} 
we form $T=\mathcal{T}(\mathbf{X})$. 

The CUSUM transformation is the normalised difference before and after the change for a single entry in the data matrix. Motivated by this meaning, and that in each coordinate, the CUSUM statistics is maximised at the time of spread, we propose to aggregate these CUSUM statistics in different coordinates at appropriate lags. Specifically, given any candidate source coordinate and change-point pair $(j,t)$, we compute the time of spread to each coordinate $k$ as $t+d_G(j,k)$ and aggregate $T_{k,t+d_G(j,k)}$ over $k\in[p]$ provided that $t+d_G(j,k) \leq n$.  For each $t\in [n-1]$ and $k\in[p]$, we define $\mathcal{J}_{j,t}:=\{k\in[p]: t+d_G(j,k) < n\}$.  If we do not know the sign of the signal, We form the following quadratic statistics
\begin{equation}
\label{Eq:quadratic}
Q_{j,t} := \sum_{k\in \mathcal{J}_{j,t}} (T_{k, t+d_G(j,k)}^2-1).    
\end{equation}
Here, we subtract $1$ from the summands to make them mean-centred, so that candidate change-points near the right boundary will not be disfavoured due to the set $\mathcal{J}_{j,t}$ being smaller.  

We then estimate the location of the change-point $z^*$ and the source coordinate of the spread via
\begin{equation}
\label{Eq:QuadraticEstimator}
 (\hat j, \hat z) = \argmax_{j,t} Q_{j,t}.
\end{equation}

Practically, the $p\times p$ distance matrix $(d_G(j,k):j,k\in[p])$ between every pair of vertices can be pre-calculated from the adjacency matrix in $O(p^3)$ time using the Floyd--Warshall algorithm \citep{floyd1962algorithm}. The entire estimation procedure is summarised in Algorithm~\ref{alg:split}.

\begin{algorithm}[htbp]
 \caption{Spreading change estimation procedure }
 \label{alg:split}
    \KwIn{$X\in \mathbb{R}^{p\times n}$, graph $G$}
    Compute $T \leftarrow \mathcal{T}(X)$ as in~\eqref{Eq:CUSUM}.\\
    Compute $Q_{j,t}$ for $j\in[p]$ and $t\in[n-1]$ via Equation~\eqref{Eq:quadratic}. \\
    Estimate $(\hat j,
    \hat z) = \argmax_{j,t} Q_{j,t}$.\\
    \KwOut{$(\hat j, \hat z)$}
\end{algorithm}

Figure~\ref{fig:CUSUM} illustrates the working of Algorithm~\ref{alg:split} in action.
Here, we have a data matrix $X\in\mathbb{R}^{200\times200}$, which contains a change spreading from the source coordinate $j^*=50$ from the changepoint time $z^*=50$. The right panel displays the matrix $(Q_{j,t})_{j\in[p], t\in[n-1]}$ of aggregated squared CUSUM statistics from equation~\ref{Eq:quadratic}. The darker colour indicates larger values of the $Q_{j,t}$ statistics. We can see that the aggregation proposed by~\ref{Eq:quadratic} indeed helps us locate both the source coordinate and the true time of change-point.

\begin{figure}[htbp]
\centering
\begin{tabular}{cc}
\includegraphics[width=8cm,height=8cm]{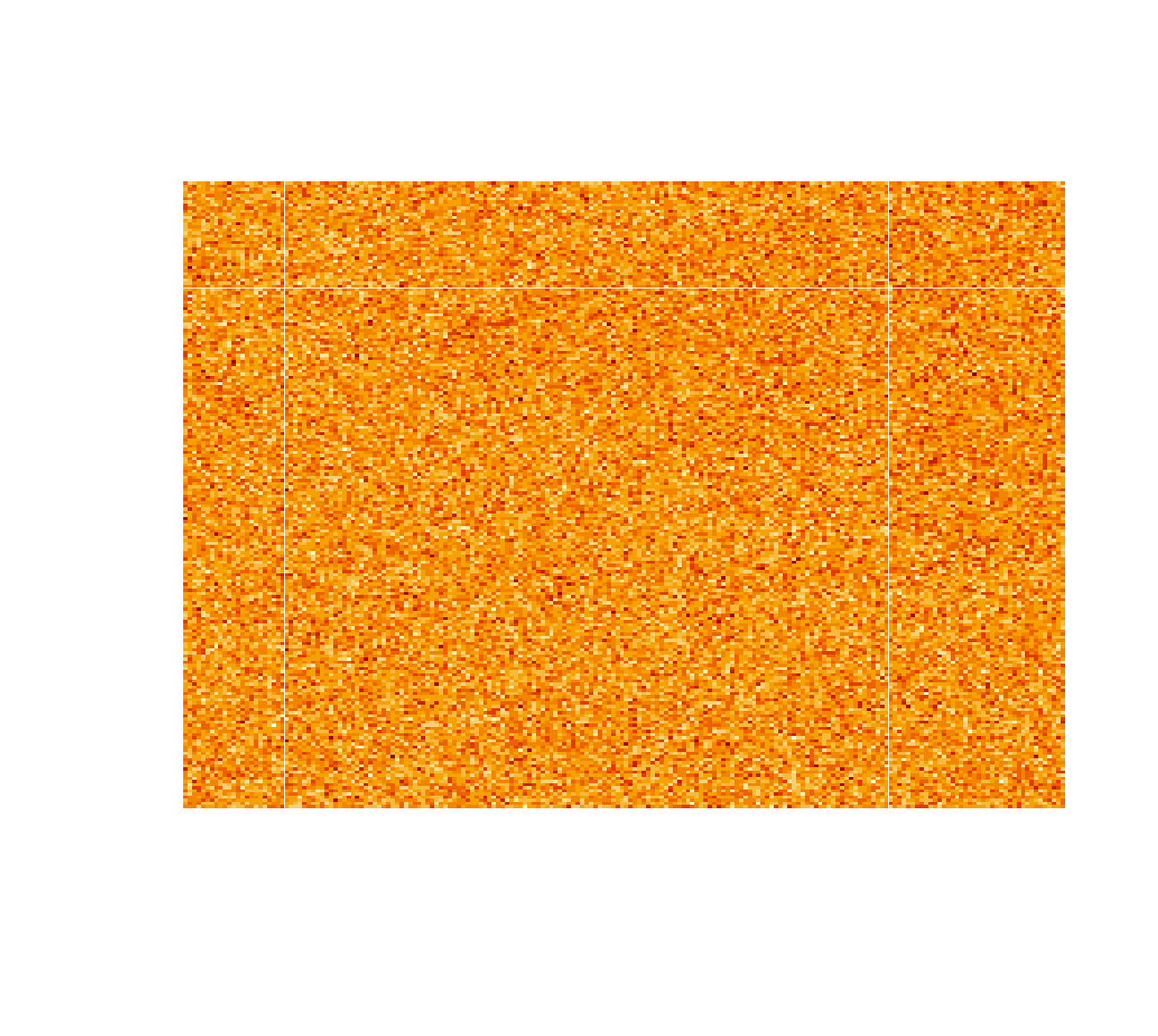} &
\includegraphics[width=8cm,height=8cm]{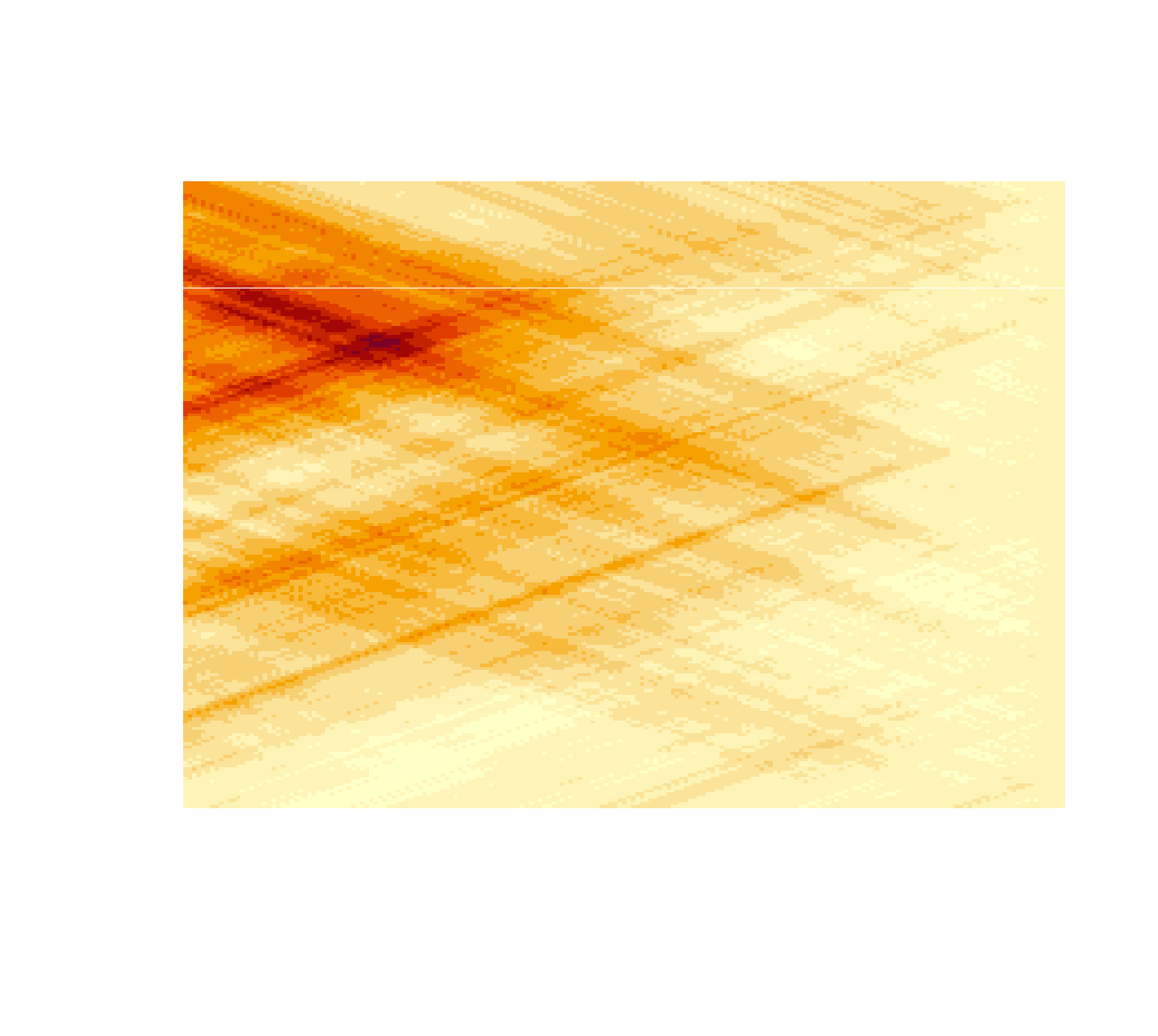}
\end{tabular}
\caption{\label{fig:CUSUM}Illustration of the SpreadDetect algorithm. The heatmap of the original data matrix $X$ is shown on the left panel, where data consist of $p=200$ nodes in a cycle graph measured over a period of $n=200$ time points. A true change occurs at $z = 50$ from coordinate $50$ and spread across the graph following the model described in Section~\ref{Sec:Simulations}. The right panel depicts the heatmap of the aggregated CUSUM statistics generated in the SpreadDetect algorithm. The estimated time of change $\hat z = 52$ and the estimated origin of change $\hat j = 45$ is where the matrix of the aggregated CUSUM statistics achieves its maximum value. }
\end{figure}

In some practical applications, it is reasonable to assume additionally that sign of the changes are the same across all coordinates. In such settings, we can modify the quadratic aggregation proposed in~\eqref{Eq:quadratic} by using the following linear statistics instead:

\begin{equation}
\label{Eq:linear}
L_{j,t} := \biggl|\sum_{k\in \mathcal{J}_{j,t}} T_{k, t+d_G(j,k)}\biggr|.
\end{equation}
The source coordinate and the change-point are then correspondingly estimated via
\begin{equation}
\label{Eq:LinearOptimiser}
    (\hat j, \hat z) = \argmax_{j,t} L_{j,t}
\end{equation}

\section{Theoretical results}
\label{Sec:Theory}
In this section, we derive theoretical guarantees of the change-point estimation procedure proposed in Algorithm~\ref{alg:split}. 

For any fixed $t^*$, $k^*$, we define the following set:
\begin{equation}
\label{Eq:mathcalJ}
\mathcal{J}_{t^*,k^*}(C_1)=\bigg\{j\in V(G): |z^*+d_G(j,j^*)-(t^*+d_G(j,k^*))|\geq C_1 (|z^*-t^*|+d_G(j^*,k^*))\bigg\}.
\end{equation}
We remark that $\mathcal{J}_{t^*,k^*}(C_1)$ also depends on $z^*$ and $j^*$, though we will suppress this dependence in the notation since in what follows, we will mostly treat $z^*$ and $j^*$ as fixed or can be inferred from the context. 
\begin{theorem}
\label{Thm:GeneralConsistency}
Suppose $n\tau\geq 2p$ and $X\sim P_{j^*,z^*,\mu_0,\mu_1}$ with $\mu_0-\mu_1 \in\{-a,a\}^p$. Define $m = m_G(C_1) := \min_{t^*,k^*} |\mathcal{J}_{t^*,k^*}(C_1)|$. There exists a universal constant $c>0$ such that if
\begin{equation}
\label{Eq:CondSignalGeneral}
a^2 \geq c\biggl\{\frac{\sqrt{p}+\log(2pn)}{n\tau m} + \frac{p\log(2pn)}{n\tau^2 m^2}\biggr\}.
\end{equation}
then, the estimator $(\hat j, \hat z)$ from~\eqref{Eq:QuadraticEstimator} satisfies with probability at least $1-1/(2pn)$ that 
\[
|\hat z - z^*| + d_G(\hat j , j^*) \leq \frac{12\sqrt{6}}{C_1 m}   \biggl\{ \frac{\sqrt{p}+\log(2pn)}{a^2} + \frac{\sqrt{pn \log(2pn)}}{a}\biggr\}.
\]
\end{theorem}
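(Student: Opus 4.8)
The plan is to treat $(\hat j,\hat z)$ as an $M$-estimator maximising $Q_{j,t}$ and to run a standard ``margin versus fluctuation'' argument. Write $\zeta_k := z^* + d_G(j^*,k)$ for the true time of spread to coordinate $k$, so that row $k$ of $X$ exhibits a single mean change of size $|\theta_k| = a$ at $\zeta_k$. Since each $T_{k,t}$ has unit variance, $\mathbb{E}(T_{k,t}^2) = (\mathbb{E}T_{k,t})^2 + 1$, and hence the population objective is $\mathbb{E}(Q_{j,t}) = \sum_{k\in\mathcal{J}_{j,t}}(\mathbb{E}T_{k,t+d_G(j,k)})^2$. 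First I would record the exact single-change CUSUM contrast: for a change at $\zeta$ evaluated at time $s$, $\mathbb{E}T_{k,s}$ equals $\theta_k(n-\zeta)\sqrt{s/\{n(n-s)\}}$ for $s\le\zeta$ and $\theta_k\zeta\sqrt{(n-s)/(ns)}$ for $s\ge\zeta$, which is maximised in absolute value at $s=\zeta$. This immediately shows that $(j^*,z^*)$ maximises $\mathbb{E}(Q_{j,t})$.

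A direct computation with the above expression gives, for a single coordinate, the decay $(\mathbb{E}T_{k,\zeta_k})^2 - (\mathbb{E}T_{k,\zeta_k+\delta})^2 = \theta_k^2\,\zeta_k\delta/(\zeta_k+\delta)$, together with its mirror image on the other side, so that whenever the misalignment $\delta$ does not exceed $\min(\zeta_k,n-\zeta_k)$ the squared contrast drops by at least $\tfrac12 a^2\delta$. Applying this with $s = t + d_G(j,k)$ and misalignment $\delta_k = |\zeta_k - (t+d_G(j,k))| = |z^*+d_G(k,j^*)-(t+d_G(k,j))|$, and summing only over $k\in\mathcal{J}_{t,j}(C_1)$ --- on which $\delta_k\ge C_1\Delta$ with $\Delta:=|z^*-t|+d_G(j^*,j)$ and $|\mathcal{J}_{t,j}(C_1)|\ge m$ --- yields the population margin $\mathbb{E}(Q_{j^*,z^*}) - \mathbb{E}(Q_{j,t}) \ge \tfrac12 C_1 m a^2\Delta$. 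The hypothesis $n\tau\ge 2p$ keeps every $\zeta_k$ in the interior (so that $\mathcal{J}_{j,t}=[p]$ and the single-change formulae apply to all rows), while the requirement $\delta_k\le\min(\zeta_k,n-\zeta_k)$ is guaranteed a posteriori since $\delta_k\le\Delta$ and the final bound will force $\Delta\lesssim n\tau$.

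For the stochastic part, fix $(j,t)$: the summands $T_{k,t+d_G(j,k)}$, $k\in\mathcal{J}_{j,t}$, are independent across rows, so $\sum_k T_{k,\cdot}^2$ is a noncentral $\chi^2$ variable with at most $p$ degrees of freedom and noncentrality $\mathbb{E}(Q_{j,t})\le\mathbb{E}(Q_{j^*,z^*})\le a^2pn/4$. A Laurent--Massart type deviation bound, together with a union bound over the at most $pn$ candidate pairs, then controls $\sup_{j,t}|Q_{j,t}-\mathbb{E}(Q_{j,t})|$ by a quantity of order $\sqrt{p} + a\sqrt{pn\log(2pn)} + \log(2pn)$ on an event of probability at least $1-1/(2pn)$. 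Combining this with the defining inequality $Q_{\hat j,\hat z}\ge Q_{j^*,z^*}$ gives $\tfrac12 C_1 m a^2\Delta \le \mathbb{E}(Q_{j^*,z^*})-\mathbb{E}(Q_{\hat j,\hat z}) \le 2\sup_{j,t}|Q_{j,t}-\mathbb{E}(Q_{j,t})|$, and solving for $\Delta = |\hat z - z^*| + d_G(\hat j,j^*)$ reproduces the stated bound. The signal condition~\eqref{Eq:CondSignalGeneral} is exactly what makes the resulting right-hand side $\lesssim n\tau$, closing the self-consistency loop left open in the previous paragraph.

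The main obstacle is the uniform stochastic control with the \emph{sharp} leading term, namely obtaining the clean $\sqrt p$ rather than the $\sqrt{p\log(2pn)}$ that a crude union bound would produce; this is where a dedicated concentration lemma (with a chaining-type argument exploiting the smoothness of $s\mapsto T_{k,s}$, deferred to the appendix) is needed to separate the dimension contribution from the union-bound cost. A secondary but genuine nuisance is the bookkeeping around the boundary: verifying $\mathcal{J}_{j,t}=[p]$, treating separately the two cases $s\le\zeta_k$ and $s\ge\zeta_k$ in each single-coordinate decay estimate, and running the self-consistency step that confines $\Delta$ to the interior regime, so that the linear (rather than the saturated) part of the CUSUM decay can legitimately be invoked.
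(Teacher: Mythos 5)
Your proposal is correct and follows essentially the same route as the paper: an exact computation of the population CUSUM profile, a margin bound obtained by summing the single-coordinate squared-contrast decay over the set $\mathcal{J}_{\hat z,\hat j}(C_1)$ of cardinality at least $m$, noncentral-$\chi^2$ concentration with a union bound over the $pn$ candidate pairs, and the basic inequality $Q_{\hat j,\hat z}\geq Q_{j^*,z^*}$. Two small points of comparison. First, your closed-form decay $(\mathbb{E}T_{k,\zeta})^2-(\mathbb{E}T_{k,\zeta+\delta})^2=\theta_k^2\zeta\delta/(\zeta+\delta)$ replaces the paper's appeal to Lemma~7 of \citet{wang2016highdimensional} combined with the peak-height bound \eqref{Eq:Apeak}; it is cleaner and gives a slightly better constant, though you should phrase the saturation as the paper does --- keep $\min\{C_1\Delta, n\tau/2\}$ in the margin and let condition~\eqref{Eq:CondSignalGeneral} rule out the saturated branch --- rather than as an a posteriori bootstrap, to avoid any appearance of circularity. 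Second, the step you single out as the main obstacle (obtaining $\sqrt{p}$ rather than $\sqrt{p\log(2pn)}$ via chaining) is not something the paper actually resolves: its proof uses a plain union bound with Birg\'e's inequality, which honestly yields a term of order $\sqrt{(p+npa^2)\log(2pn)}$, and the clean $\sqrt{p}$ appearing in \eqref{Eq:BasicInequalityGeneral} and in the theorem statement traces to the $\log(2pn/\delta)=2\log(2pn)$ factor being absorbed into only the second summand inside the square root. So you need not build chaining machinery to reproduce the paper's argument; conversely, if you want the literal $\sqrt{p}$ of the stated bound, the paper's proof does not supply it either, and your instinct that additional work would be required there is sound.
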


The $n\tau\geq 2p$ condition is placed to ensure that the change happens early in the time series to allow sufficient time to spread to all nodes in the network. This helps simplify our analysis and presentation. However, we note that a similar result can be derived without this assumption; see Theorem~\ref{Thm:Consistency} in Appendix~\ref{Sec:Ancillary}.
We remark also that Condition~\eqref{Eq:CondSignalGeneral} is mild in view of the conclusion of Theorem~\ref{Thm:GeneralConsistency}. Indeed, for the right-hand side of the loss bound to be nontrivial (i.e.\ less than $n+p$), we would at least need $a^2 \gtrsim \{\sqrt{p} + \log(2pn)\}/(nm) + p\log(2pn) / (nm^2)$. Thus,~\eqref{Eq:CondSignalGeneral} only requires $a^2$ to be larger than a factor of at most $\tau^{-2}$ than minimally what is required in Theorem~\ref{Thm:GeneralConsistency}. The final loss bound is inversely proportional to $C_1 m_G(C_1)$. In general, $m_G(C_1)$ is a decreasing function of $C_1$ and by the triangle inequality, $m_G(C_1) = 0$ for all $C_1\geq 1$. Hence, the optimal loss bound we can obtain involves a carefully chosen trade-off between $C_1$ and $m_G(C_1)$ in the denominator of the final bound. In practice, in many applications, we have $m_G(C_1) \asymp p$ for some $C_1\asymp 1$. Under such assumptions,  and if in addition $\log(n) = O(\sqrt{p})$, the conclusion of Theorem~\ref{Thm:GeneralConsistency} simplifies to
\[
\frac{|\hat z - z^*|}{n} + \frac{d_G(\hat j , j^*)}{p} = O\biggl(\frac{1}{p^{1/2}\|\theta\|_2^2 } + \frac{\sqrt{n\log(2pn)}}{p\|\theta\|_2 }\biggr),
\]
showing that both the location of change and the origin of change estimators are consistent when $\|\theta\|_2 \gg \max\{p^{-1/4}, p^{-1}\sqrt{n\log(2pn)}\}$. 

As mentioned above, the quantity $m_G(C_1)$ plays an important role in our theoretical control of the loss of change-point location and origin estimation. To get a sense of the magnitude of this quantity, we compute $m_G(1/4)$ for grid graphs, binary trees and random Erd\H{o}s--R\'enyi graphs. Figure~\ref{fig:proportion} shows that we have $m_G(1/4) \geq cp$ for some constant $c>0$ in all these simulation settings. Moreover, for each specific type of graph, $m_G(1/4)/p$ tends to be relatively stable when $p$ is large. Theoretically, $m(C_1)$ needs to be controlled in a case-specific manner. Below, we illustrate how this can be done in the setting of a $d$-dimensional grid graph. For simplicity of exposition, we introduce additional symmetry to require that the grid is `wrapped around the edges', in the sense that $G = \prod_{r=1}^d G_r$, where each $G_r$ is a $p_1$-cycle $C_{p_1}$ with $p_1^d = p$. Working with product of cycles instead of paths makes all vertices of $G$ equivalent. The following proposition controls $m_G(1/(4d))$ of such a graph $G$. 

\begin{figure}[tbp]
\centering
\includegraphics[width=11cm]{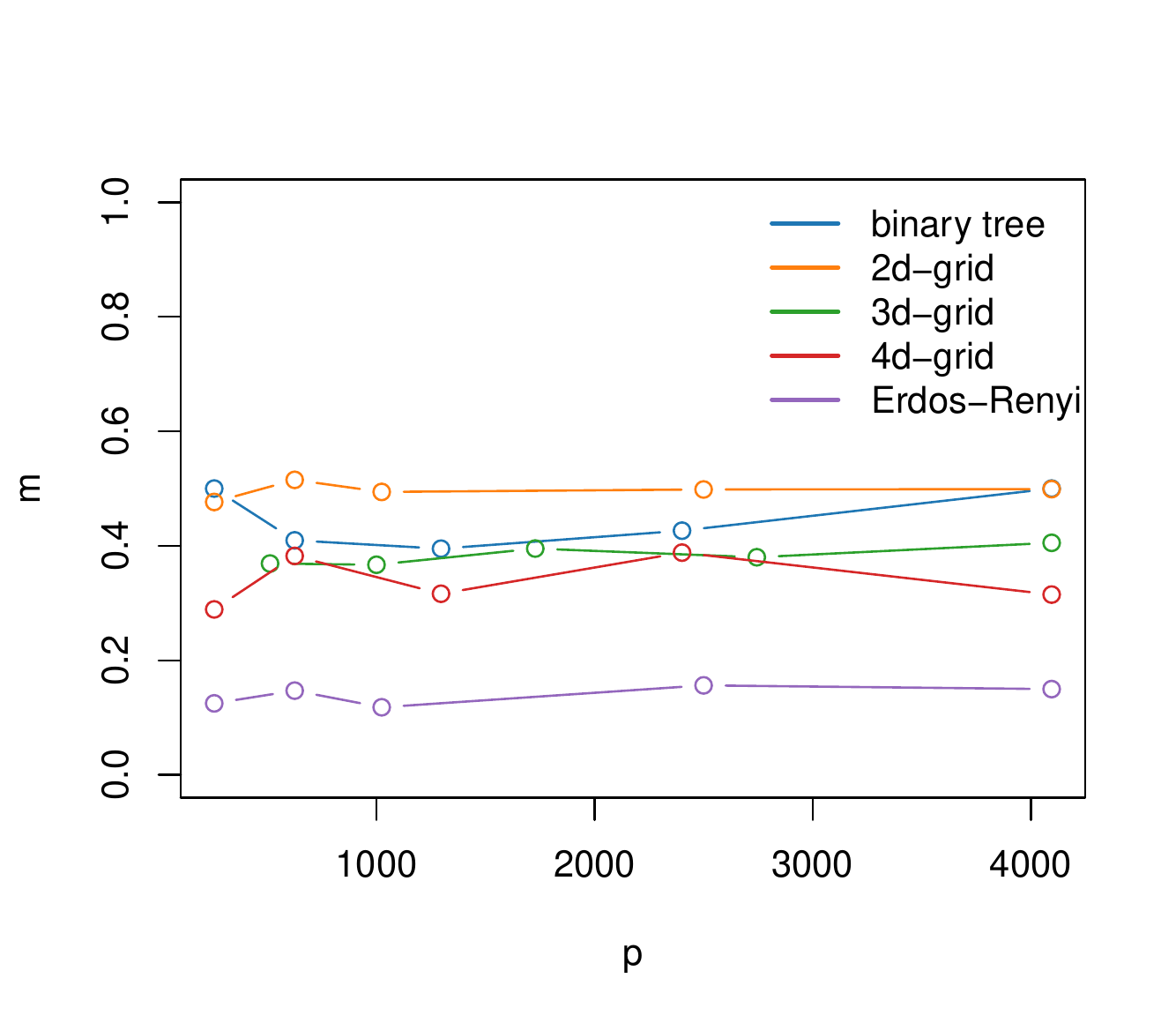} 
\caption{\label{fig:proportion}$m_G(1/4)/p$ for different graphs.}
\end{figure}

%We then consider the special case where the graph is a multivariate graph with edges wrapped around in the sense that: $G=\overbrace{C_{p_1}\times\ldots\times C_{p_1}}^{d \text{ copies}}$. Each of $C_{p_1}$ is $p$-cycle, i.e. the distance matrix is given by
%\[
% d(j, k) = \min\{|j - k|, |j + p - k|, |j - p - k|\}.
%\]
%In this case, we can view the $p\times (n-1)$ matrix $A$ as a function on grid points $\mathbb{L}: = ((x, y): x\in [n-1], y\in[p])$ by identifying $A_{j,t}$ with the value of the function on $(t, j)$.
%Let
%\[
%H:=\{(z,j): z=z^*+d(j,j^*), j\in[p]\} = \{(z_j, j): j\in[p]\},
%\]
%denote the set of starting time of change for all coordinates. Observe that if $(z,j)\in H$, then $z = \argmax_{t} A_{j,t}$. Now, for fixed $k^*\in[p]$ and $t^*\in [n-1]$ we define 
%\[
%H':=\{(t,k): t=t^*+d(k,k^*), k\in[p]\}.
%\]
%We would like to understand for each $j\in[p]$ the difference in values between $A_{j,z}$ and $A_{j,t}$ for $(z,j) \in H)$ and  $(t,j)\in H'$, i.e.\ $A_{j,z^*+d(j,j^*)} - A_{j,t^*+d(j,k^*)}$.  
\begin{prop}
\label{Prop:gridcardinality}
Suppose $G = \prod_{r=1}^d G_r$ with $G_r \cong C_{p_1}$ for all $r\in[d]$ and $p=p_1^d$. Assume further that $n\tau\geq 2p_1$. Then we have $m_G(1/(4d)) \geq p/8^d$.
\end{prop}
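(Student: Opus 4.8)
The plan is to prove the bound by exhibiting, for \emph{every} candidate pair $(t^*,k^*)$, an explicit product subset of $\mathcal{J}_{t^*,k^*}(1/(4d))$ of cardinality at least $(p_1/8)^d = p/8^d$; since $m_G(1/(4d))$ is the minimum of $|\mathcal{J}_{t^*,k^*}(1/(4d))|$ over all such pairs, this yields the claim at once. The first step is to exploit the Cartesian product structure. Writing each vertex as $j=(j_1,\dots,j_d)$ with $j_r\in C_{p_1}$, graph distance is additive across coordinates, so $d_G(j,j^*)=\sum_{r=1}^d d_{C_{p_1}}(j_r,j_r^*)$ and likewise for $k^*$. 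Setting $\delta:=z^*-t^*$, $D_r:=d_{C_{p_1}}(j_r^*,k_r^*)$, $D:=\sum_r D_r = d_G(j^*,k^*)$, and introducing the per-coordinate signed distance $\phi_r(w):=d_{C_{p_1}}(w,j_r^*)-d_{C_{p_1}}(w,k_r^*)\in[-D_r,D_r]$, the inequality defining $\mathcal{J}_{t^*,k^*}(1/(4d))$ in~\eqref{Eq:mathcalJ} reads $\bigl|\delta+\sum_{r=1}^d\phi_r(j_r)\bigr|\geq \tfrac{1}{4d}(|\delta|+D)$.

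The second step reduces this coupled inequality to a per-coordinate magnitude requirement. Suppose first $\delta\geq 0$ and set $B_r:=\{w\in C_{p_1}:\phi_r(w)\geq D_r/2\}$. For any $j\in\prod_r B_r$ we get $\sum_r\phi_r(j_r)\geq D/2$, hence $\delta+\sum_r\phi_r(j_r)\geq \delta+D/2\geq \tfrac14(|\delta|+D)\geq \tfrac{1}{4d}(|\delta|+D)\geq 0$, so $\prod_r B_r\subseteq \mathcal{J}_{t^*,k^*}(1/(4d))$. The case $\delta<0$ is handled identically using $B_r':=\{w:\phi_r(w)\leq -D_r/2\}$, which has the same cardinality as $B_r$ because the reflection of $C_{p_1}$ that swaps $j_r^*$ and $k_r^*$ sends $\phi_r$ to $-\phi_r$. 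It therefore remains only to prove the one-dimensional bound $|B_r|\geq p_1/8$ for every $D_r\in\{0,1,\dots,\lfloor p_1/2\rfloor\}$.

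The third step establishes this count by computing $\phi_r$ explicitly on the cycle. Placing $j_r^*=0$ and $k_r^*=D_r$, one finds $\phi_r(w)=D_r$ for all $D_r\leq w\leq\lfloor p_1/2\rfloor$ (the ``far'' block), while on the short arc $0\leq w\leq D_r$ one has $\phi_r(w)=2w-D_r$, and on the remaining long arc $\phi_r$ descends linearly from $D_r$ to $-D_r$ before settling at $-D_r$. When $D_r\leq 3p_1/8$ the far block alone has cardinality $\lfloor p_1/2\rfloor-D_r+1\geq p_1/8$; when $D_r>3p_1/8$ the far block is short, but the two stretches where $\phi_r$ crosses the level $D_r/2$ (one on the short arc, one on the long arc) each contribute about $D_r/4$ vertices, for a total of at least $D_r/2-O(1)\geq p_1/8$. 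In either regime $|B_r|\geq p_1/8$, and multiplying over the $d$ coordinates gives $|\prod_r B_r|=\prod_r|B_r|\geq (p_1/8)^d=p/8^d$. The timing hypothesis $n\tau\geq 2p_1$ plays no role in this combinatorial count beyond guaranteeing, as in the model setup, that the spreading process is observed long enough for the relevant times of spread to be well defined.

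I expect the genuine obstacle to lie in the third step in the regime $D_r\approx p_1/2$, where the far block of exact extremes nearly disappears and the lower bound must instead be extracted entirely from the two transition stretches. Tracking the floors, ceilings and parity of the crossing points there is the only delicate part of the argument, and it is precisely this bookkeeping that forces the slightly lossy constant $1/8$ rather than the $\approx 1/4$ suggested by the leading-order count; the generous factor then comfortably absorbs all rounding, including any small-$p_1$ edge cases that can, if needed, be checked directly.
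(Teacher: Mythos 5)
Your proof is correct and, while it shares the paper's overall architecture (reduce to one cycle factor via additivity of the graph distance, exhibit a per\nobreakdash-coordinate set of size at least $p_1/8$, and take the product), the key one-dimensional step is genuinely different. The paper's per-coordinate sets $\mathcal{J}_r$ are defined through the signed combined quantity $\ell_{G_r}(\tilde j)=(z^*-t^*)+\phi_r(\tilde j)$ and their size is controlled by the four-case analysis of Proposition~\ref{Prop:SetJ1}, after which the time offset must be redistributed across the $d$ factors, costing the factor $d$ in the constant $C_1=1/(4d)$. You instead decouple time from space entirely: your sets $B_r=\{w:\phi_r(w)\geq D_r/2\}$ are purely spatial, the offset $\delta=z^*-t^*$ is absorbed once at the end via $\delta+D/2\geq\tfrac14(|\delta|+D)$ for $\delta\geq 0$ (and its mirror image for $\delta<0$, using the reflection automorphism), and the cardinality bound $|B_r|\geq p_1/8$ follows from the explicit piecewise-linear formula for $\phi_r$ on the cycle with a two-case split on $D_r$ versus $3p_1/8$. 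This buys a shorter and more transparent argument, and in fact a slightly stronger conclusion: you establish membership in $\mathcal{J}_{t^*,k^*}(1/4)$, hence $m_G(1/4)\geq p/8^d$, which implies the stated bound since $m_G(C_1)$ is non-increasing in $C_1$. The remaining floor/ceiling bookkeeping near $D_r\approx p_1/2$ that you flag is routine and the constant $1/8$ absorbs it (the short-arc and descending-arc stretches each contribute at least $D_r/4-1$ vertices, giving at least $D_r/2-O(1)>p_1/8$ there); you are also right that $n\tau\geq 2p_1$ is not used in this purely combinatorial count.
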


Treating the dimension of the grid as fixed, we have the desired bound that $m_G(C_1)\asymp p$ for some $C_1\asymp 1$. The following result is an immediate consequence of Theorem~\ref{Thm:GeneralConsistency} together with Proposition~\ref{Prop:gridcardinality}.
\begin{corollary}
\label{Cor:Grid}
Under the same asumption as in Proposition~\ref{Prop:gridcardinality}. Suppose $X\sim P_{j^*,z^*, \mu_0,\mu_1}$ with $\mu_0-\mu_1\in\{-a,a\}^p$. There exist $c, C>0$, depending only on $d$, such that if 
\begin{equation}
\label{Eq:CondSignal}
a^2 \geq c\biggl\{\frac{\sqrt{p}+\log(2pn)}{n\tau p} + \frac{pn\log(2pn)}{n^2\tau^2 p^2}\biggr\},
\end{equation}
then with probability at least $1-1/(2pn)$, the estimator $(\hat j, \hat z)$  defined in~\eqref{Eq:QuadraticEstimator} satisfies that 
\[
|\hat z - z| + d_G(\hat j , j) \leq C \biggl\{ \frac{\sqrt{p}+\log(2pn)}{a^2p} + \frac{\sqrt{pn \log(2pn)}}{ap}\biggr\}.
\]
\end{corollary}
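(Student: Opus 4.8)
The plan is to treat the corollary as a direct specialisation of Theorem~\ref{Thm:GeneralConsistency} to the product-of-cycles graph, using Proposition~\ref{Prop:gridcardinality} to convert the graph-dependent quantity $m_G(C_1)$ into an explicit function of $p$. Concretely, I would fix the free parameter $C_1 = 1/(4d)$ throughout, so that Proposition~\ref{Prop:gridcardinality} supplies the bound $m := m_G(1/(4d)) \geq p/8^d$. Since $d$ is treated as a fixed constant, both $C_1^{-1} = 4d$ and $8^d$ are constants depending only on $d$, and the entire argument reduces to substituting these bounds into the hypothesis and conclusion of Theorem~\ref{Thm:GeneralConsistency} and absorbing the resulting $d$-dependent factors into the constants $c$ and $C$.

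First I would verify that hypothesis~\eqref{Eq:CondSignal} of the corollary implies hypothesis~\eqref{Eq:CondSignalGeneral} of the theorem. Using $1/m \leq 8^d/p$ and $1/m^2 \leq 8^{2d}/p^2$, the right-hand side of~\eqref{Eq:CondSignalGeneral} is bounded above by a $d$-dependent constant times $\{\sqrt{p}+\log(2pn)\}/(n\tau p) + \log(2pn)/(n\tau^2 p)$; after noting that the second bracketed term of~\eqref{Eq:CondSignal} simplifies as $pn\log(2pn)/(n^2\tau^2 p^2) = \log(2pn)/(n\tau^2 p)$, this is exactly the form appearing in~\eqref{Eq:CondSignal}. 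Choosing the constant $c$ in the corollary to be a suitable $8^{2d}$-multiple of the universal constant from the theorem then guarantees that~\eqref{Eq:CondSignalGeneral} holds whenever~\eqref{Eq:CondSignal} does.

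Next I would propagate the same substitution through the loss bound. The theorem's bound carries the prefactor $12\sqrt{6}/(C_1 m)$, which with $C_1 = 1/(4d)$ and $m \geq p/8^d$ is at most $12\sqrt{6}\cdot 4d\cdot 8^d/p$. Pulling this $1/p$ out of the bracketed expression converts the theorem's $\{(\sqrt{p} + \log(2pn))/a^2 + \sqrt{pn\log(2pn)}/a\}$ into the corollary's $\{(\sqrt{p} + \log(2pn))/(a^2 p) + \sqrt{pn\log(2pn)}/(ap)\}$, and collecting the numerical and $d$-dependent factors into a single constant $C$ yields the stated conclusion. The probability statement $1 - 1/(2pn)$ is inherited verbatim from the theorem.

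Since every step is a substitution, there is no genuine analytic obstacle; the only care needed is bookkeeping. In particular I would check that the standing hypotheses match up---most notably that the time-to-spread condition needed to invoke Theorem~\ref{Thm:GeneralConsistency} is compatible with the assumption $n\tau \geq 2p_1$ inherited from Proposition~\ref{Prop:gridcardinality}---and that all the $d$-dependent constants ($4d$, $8^d$, $8^{2d}$) are consolidated consistently into $c$ and $C$, which are therefore permitted to depend on $d$ but not on $p$, $n$, or $a$.
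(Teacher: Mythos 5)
Your proposal is correct and is exactly the paper's route: the paper offers no separate proof, stating only that the corollary is an immediate consequence of Theorem~\ref{Thm:GeneralConsistency} and Proposition~\ref{Prop:gridcardinality}, and your substitution of $C_1 = 1/(4d)$ and $m \geq p/8^d$ with $d$-dependent constants absorbed into $c$ and $C$ is the intended argument. The one point you rightly flag---that Theorem~\ref{Thm:GeneralConsistency} requires $n\tau \geq 2p = 2p_1^d$ while the corollary formally inherits only $n\tau \geq 2p_1$ from Proposition~\ref{Prop:gridcardinality}---is a genuine looseness in the paper's statement rather than in your proof, and is resolved by reading the corollary as additionally assuming $n\tau \geq 2p$.
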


While the focus of our discussion so far has been the estimation of a changepoint (both in terms of the time of change and location of the source of change), our method can be easily modified for the related testing problem.  More preicsely, given the data $X$ described in Section~\ref{sec:method}, we are interest in testing $H_0: \theta=0$ against the alternative: $H_1: \theta\neq 0$. We can construct a test based on the quadratic statistics computed according to Algorithm~\ref{alg:split} as follows:
\begin{equation}
\label{Eq:Test}
\psi_\lambda(X)=\mathbbm{1}_{\{\max_{j\in[p],t\in[n-1]} Q_{j,t} \geq \lambda\}}.
\end{equation}
The following theorem shows that for an appropriate choice of $\lambda$, the test $\psi_{\lambda}$ defined above has small Type I and Type II errors.
\begin{theorem}
\label{thm:testing}
Given $X\sim P = P_{j^*,z^*,\mu_0,\mu_1}$. For any $\delta\in(0,1)$ and $\lambda\geq 2\sqrt{p\log(pn/\delta)}+2\log(pn/\delta)$, the test $\psi_{\lambda}$ defined in~\eqref{Eq:Test} has that following properties.
\begin{enumerate}[label=(\alph*)]
    \item If $\theta=0$, then 
    \[
    \mathbb{P}_{P}(\psi_\lambda(X)=1)\leq \delta. 
    \]
    \item There exists a universal constant $C>0$ such that if $a^2 \geq \frac{C\lambda}{n\tau\min\{2p,n\tau\}}$, then 
    \[
    \mathbb{P}_{P}(\psi_\lambda(X)=1)\geq 1-\delta.
    \]
\end{enumerate}
\end{theorem}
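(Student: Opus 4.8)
The plan is to bound the Type I and Type II errors separately, exploiting a crucial independence structure. Within each aggregated statistic $Q_{j,t}=\sum_{k\in\mathcal{J}_{j,t}}(T_{k,t+d_G(j,k)}^2-1)$, the summands involve distinct rows of $T=\mathcal{T}(X)$, each sampled at a single time index. Since $\mathcal{T}$ acts row-wise on a matrix $X=\mu+W$ whose noise $W$ has independent Gaussian entries, the terms $T_{k,t+d_G(j,k)}$ are mutually independent across $k$ (though correlated in $t$ within a fixed row), and each is marginally Gaussian with unit variance because the CUSUM transform in~\eqref{Eq:CUSUM} is variance-normalised.

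For part (a), under $H_0$ we have $\theta=0$, so every $T_{k,s}$ is a standard Gaussian and each $Q_{j,t}$ is a sum of $|\mathcal{J}_{j,t}|$ independent centred $\chi^2_1$ variables. First I would apply a standard chi-squared upper-tail bound (Laurent--Massart), giving $\mathbb{P}(Q_{j,t}\geq 2\sqrt{|\mathcal{J}_{j,t}|\,x}+2x)\leq e^{-x}$; since $|\mathcal{J}_{j,t}|\leq p$, the event $\{Q_{j,t}\geq 2\sqrt{p\,x}+2x\}$ has probability at most $e^{-x}$. Taking $x=\log(pn/\delta)$ matches exactly the stated threshold $\lambda$, and a union bound over the at most $p(n-1)<pn$ index pairs $(j,t)$ yields $\mathbb{P}_P(\psi_\lambda(X)=1)\leq pn\cdot(\delta/pn)=\delta$.

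For part (b), the idea is to lower bound the statistic at the true parameters $(j,t)=(j^*,z^*)$. The key observation is that at this candidate the evaluation lag $z^*+d_G(j^*,k)$ coincides exactly with the true time of spread to node $k$, so each $T_{k,z^*+d_G(j^*,k)}$ is a Gaussian centred at the peak CUSUM mean $\nu_k:=\sqrt{z_k(n-z_k)/n}\,\theta_k$, where $z_k:=z^*+d_G(j^*,k)$. Writing $T_{k,z_k}=\nu_k+g_k$ with independent $g_k\sim N(0,1)$, we obtain $Q_{j^*,z^*}=\sum_k(\nu_k^2+2\nu_k g_k+(g_k^2-1))$, whose mean is the aggregate signal $\sum_k\nu_k^2=a^2\sum_k z_k(n-z_k)/n$. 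I would then establish the lower bound $\sum_k\nu_k^2\gtrsim a^2\,n\tau\,\min\{2p,n\tau\}$, so that the signal condition $a^2\geq C\lambda/\{n\tau\min(2p,n\tau)\}$ forces $\sum_k\nu_k^2\geq 2\lambda$ for a large enough universal $C$.

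Finally, I would control the fluctuation $\sum_k(2\nu_k g_k+(g_k^2-1))$: its Gaussian part is $N(0,4\sum_k\nu_k^2)$ and its centred-$\chi^2$ part again obeys a Laurent--Massart lower-tail bound, so Bernstein-type estimates show the deviation below the mean exceeds $\lambda$ with probability at most $\delta$ once $C$ is large (using $\lambda\gtrsim\sqrt{p\log(pn/\delta)}+\log(pn/\delta)$). This gives $Q_{j^*,z^*}\geq\lambda$, hence $\psi_\lambda(X)=1$, with probability at least $1-\delta$. The main obstacle is the aggregate-signal lower bound $\sum_k\nu_k^2\gtrsim a^2 n\tau\min\{2p,n\tau\}$: it requires understanding how the spreading dynamics distribute nodes across graph distances and how the boundary-dependent CUSUM weights $z_k(n-z_k)/n$ interact with the restriction $k\in\mathcal{J}_{j^*,z^*}$. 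I expect to handle it using connectedness of $G$ (at least $d+1$ nodes lie within distance $d$ of $j^*$) together with the fact that nodes reached well before time $n$ each carry CUSUM weight of order $n\tau$, splitting into the regimes $n\tau\geq 2p$ and $n\tau<2p$ to produce the $\min$.
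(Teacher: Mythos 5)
Your proposal is correct and follows essentially the same route as the paper: part (a) is the identical Laurent--Massart bound with threshold matching and a union bound over the $p(n-1)$ pairs, and part (b) is the same decomposition of $Q_{j^*,z^*}$ into a noncentrality term plus fluctuations (the paper invokes Birg\'e's noncentral-$\chi^2$ lemma where you assemble the equivalent Gaussian-plus-centred-$\chi^2$ concentration by hand), with the key step --- the lower bound $B_{j^*,z^*}\gtrsim a^2 n\tau\min\{2p,n\tau\}$ via connectedness giving $\min(p,n\tau/2)$ nodes each carrying CUSUM weight of order $a^2 n\tau$ --- being exactly the argument the paper uses.
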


From Theorem~\ref{thm:testing} above, if $p=O(n\tau)$ and taking $\delta=1/(pn)$, then the test $\psi_{\lambda}$ defined in~\eqref{Eq:Test} is able to detect a change when $a^2 \geq \frac{C\sqrt{\log(pn)}}{\sqrt{p}n\tau}$. Note that when $m_G(C_1)$ is of order $p$ and $\sqrt{p}\tau \gtrsim \log(2pn)$, then the signal-size condition in~\eqref{Eq:CondSignalGeneral} is equivalent to 
\[
a^2 \gtrsim \frac{1}{n\tau\sqrt{p}} + \frac{\log(2pn)}{n\tau^2 p} \gtrsim \frac{1}{n\tau\sqrt{p}}.
\]
Hence, the signal strength needed here for testing is consistent, up to logarthmic factors, with ~\eqref{Eq:CondSignalGeneral} in Theorem~\ref{Thm:GeneralConsistency} in such a setting. However, the estimation problem is harder comparing to the testing problem, when $C_1 m_G(C_1)$ is much smaller than $p$ for all choices of $C_1 \in(0,1)$. This can happen, for instance, in the case when there exists a $t^*$ close to $z^*$ such that the signal from $z^*$ needs to pass from $t^*$ to spread over the rest of the coordinate,  it is hard to tell which time point does the signal starts. However, for the testing problem, we only need to know whether there is a change regardless of the location.

To understand the optimality of the signal-size condition in~\eqref{Eq:CondSignalGeneral}, we derive a minimax lower bound for testing the existence of a change-point. Let 
\begin{align*}
 \Theta_0 &:= \{(j^*,z^*,\mu_0,\mu_1)\in\Theta: \mu_0=\mu_1, \min(z^*,n-z^*)\geq n\tau\}\\
 \Theta_{1,a} &:= \{(j^*,z^*,\mu_0,\mu_1)\in\Theta: \mu_0-\mu_1\in\{-a,a\}^p, \min(z^*,n-z^*)\geq n\tau\}
\end{align*}
be two subspaces in the parameter space $\Theta$. We consider the problem of testing the null hypothesis $(j^*,z^*,\mu_0,\mu_1)\in\Theta_0$ against the alternative $(j^*,z^*,\mu_0,\mu_1)\in\Theta_1$ using data $X$. 
\begin{theorem}
\label{Thm:LowerBound}
If $n\tau\geq 1$, then for $a^2\leq \frac{\sqrt{\log 2}}{\sqrt{2p}n\tau}$, we have that
 \[
       \inf_{\psi} \Bigl\{\sup_{(j^*,z^*,\mu_0,\mu_1)\in\Theta_0} P_{j^*,z^*,\mu_0,\mu_1}(\psi = 1) + \sup_{(j^*,z^*,\mu_0,\mu_1)\in\Theta_{1,a}} P_{j^*,z^*,\mu_0,\mu_1}(\psi = 0)\Bigr\} \geq 1/2,
\]
where the infimum is taken over all measurable test functions $\psi: \mathbb{R}^{p\times n} \to \{0,1\}$. 
\end{theorem}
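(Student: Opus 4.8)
The plan is to use Le Cam's method, reducing the minimax testing risk to a total-variation distance between a single null distribution and a carefully chosen mixture over the alternative. I will fix one null $P_0\in\Theta_0$ and a prior $\pi$ supported on $\Theta_{1,a}$, writing $\bar P_1 := \int P_{j^*,z^*,\mu^0,\mu^1}\,d\pi$ for the induced marginal, and invoke the standard chain
\[
\inf_{\psi}\Bigl\{\sup_{\Theta_0}P_0(\psi=1)+\sup_{\Theta_{1,a}}P_1(\psi=0)\Bigr\}\geq \inf_{\psi}\bigl\{P_0(\psi=1)+\bar P_1(\psi=0)\bigr\} = 1-\mathrm{TV}(P_0,\bar P_1).
\]
It then suffices to show $\mathrm{TV}(P_0,\bar P_1)\leq 1/2$, for which I will use $\mathrm{TV}(P_0,\bar P_1)\leq \tfrac12\sqrt{\chi^2(\bar P_1\|P_0)}$ (Cauchy--Schwarz) and bound the $\chi^2$-divergence by $1$. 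A mixture rather than a two-point comparison is essential: a single alternative gives a Kullback--Leibler bound of order $a^2 n\tau p$, which would only prove impossibility for $a^2\lesssim 1/(n\tau p)$, whereas randomising the signs of the change recovers the extra $\sqrt p$ factor present in the claimed threshold.

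For the construction I will take $P_0$ to be the element of $\Theta_0$ with $\mu^0=\mu^1=0$, i.e.\ all entries i.i.d.\ $N(0,1)$; fix an arbitrary source $j^*$; and fix $z^* = \lfloor n-n\tau\rfloor$, which satisfies $\min(z^*,n-z^*)\geq n\tau$ (using the implicit requirement $\tau\leq 1/2$ that makes $\Theta_0$ nonempty) and has post-change window length $n-z^*=\lceil n\tau\rceil$. Under the prior I set $\mu^0=0$ and draw $\mu^1=a\epsilon$ with $\epsilon$ uniform on $\{-1,+1\}^p$, so that $\mu^1-\mu^0\in\{-a,a\}^p$ as required and the mean matrix is $(\mu_\epsilon)_{j,t}=a\epsilon_j\mathbbm{1}\{t>s_j\}$ with $s_j:=z^*+d_G(j,j^*)$. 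Applying the Ingster--Suslina identity for Gaussian location mixtures,
\[
1+\chi^2(\bar P_1\|P_0) = \mathbb{E}_{\epsilon,\epsilon'}\exp\langle \mu_\epsilon,\mu_{\epsilon'}\rangle,
\]
with $\epsilon,\epsilon'$ independent copies. Since $\langle \mu_\epsilon,\mu_{\epsilon'}\rangle = \sum_{j}w_j\,\epsilon_j\epsilon'_j$ where $w_j:=a^2(n-s_j)_+$, and each $\epsilon_j\epsilon'_j$ is an independent Rademacher variable, the right-hand side factorises as $\prod_{j}\cosh(w_j)\leq \exp\bigl(\tfrac12\sum_j w_j^2\bigr)$.

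It remains to control $\sum_j w_j^2$. Because $z^*$ sits at the end of the window, $(n-s_j)_+ = (\lceil n\tau\rceil - d_G(j,j^*))_+\leq \lceil n\tau\rceil\leq 2n\tau$, where the last step uses the hypothesis $n\tau\geq 1$ to absorb the ceiling. Hence $\sum_j w_j^2\leq a^4 p(2n\tau)^2$, and substituting $a^2\leq \sqrt{\log 2}/(\sqrt{2p}\,n\tau)$ gives $\sum_j w_j^2\leq 2\log 2$, so that $\chi^2(\bar P_1\|P_0)\leq e^{\log 2}-1=1$ and $\mathrm{TV}(P_0,\bar P_1)\leq 1/2$, which closes the argument.

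The step requiring the most care is arranging the construction so that the $\chi^2$-bound lands on the exact threshold. Two choices drive this: randomising the signs, which produces the $\sqrt p$ improvement through the sub-Gaussian bound $\cosh(w)\leq e^{w^2/2}$; and placing $z^*$ as late as $\min(z^*,n-z^*)\geq n\tau$ allows, which truncates the spread so each coordinate accumulates at most $\lceil n\tau\rceil$ post-change observations. Taking $z^*$ early instead would inflate $\sum_j(n-s_j)_+^2$ to order $p\,\{n(1-\tau)\}^2$ and prove a lower bound only against a far smaller signal. I note finally that the crude estimate $(n-s_j)_+\leq\lceil n\tau\rceil$ discards the graph geometry (only nodes within distance $\lceil n\tau\rceil$ of $j^*$ actually contribute), but it already suffices, so no finer analysis of $d_G$ is needed.
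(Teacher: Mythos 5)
Your proposal is correct and follows essentially the same route as the paper's proof: the same null point and prior (uniform sign-randomisation of $\mu^1-\mu^0$ with $z^*$ placed at $n-\lceil n\tau\rceil$), the same reduction via total variation and the $\chi^2$-divergence, the Ingster--Suslina identity, and the bound $\cosh(w)\leq e^{w^2/2}$ with $(n-s_j)_+\leq\lceil n\tau\rceil\leq 2n\tau$. The only cosmetic difference is that the paper fixes $j^*$ at a graph centre, a choice that is not actually used in its bound, so your arbitrary choice of $j^*$ is equally valid.
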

In the setting described after Theorem~\ref{thm:testing},  Theorem~\ref{Thm:LowerBound} shows that Condition~\eqref{Eq:CondSignalGeneral} in our estimation result is necessary for the even simpler task of testing the existence of a change-point.

We then consider the special case when we know the sign of the changes in each coordinate. Without loss of generality, we may assume that all changes are positive. In this case, we can use the linear statistics defined in equation~\eqref{Eq:linear} and the following theorem shows that this linear statistic achieves good performance in terms of the estimation consistency:

\begin{theorem}
\label{Thm:LinearStatConsistency}
Suppose $n\tau\geq 2p$ and $X\sim P_{j^*,z^*,\mu_0,\mu_1}$ with $\mu_0-\mu_1\in\{-a\}^p \cup\{a\}^p$. Define $m = m_G(C_1) := \min_{t^*,k^*} |\mathcal{J}_{t^*,k^*}(C_1)|$. There exists a universal constant $c$ such that if $a \geq c\sqrt{\log(pn)/(mn\tau^2)}$, then the estimator $(\hat j, \hat z)$ from~\eqref{Eq:LinearOptimiser} satisfies with probability at least $1-1/(2pn)$ that 
\[
|\hat z - z^*| + d_G(\hat j , j^*) \leq \frac{C^*\log(pn)}{a^2 m}.
\]
\end{theorem}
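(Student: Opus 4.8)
The plan is to exploit the $\argmax$ defining $(\hat j,\hat z)$ by comparing the linear statistic at the estimator with its value at the truth, after splitting $L_{j,t}$ into a signal and a noise part. Writing $X=\mu+W$ and $T=\mathcal{T}(X)=\mathcal{T}(\mu)+\mathcal{T}(W)$, for each candidate $(j,t)$ I set $S_{j,t}:=\sum_{k\in\mathcal{J}_{j,t}}[\mathcal{T}(\mu)]_{k,t+d_G(j,k)}$ and $N_{j,t}:=\sum_{k\in\mathcal{J}_{j,t}}[\mathcal{T}(W)]_{k,t+d_G(j,k)}$, so that $L_{j,t}=|S_{j,t}+N_{j,t}|$. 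A direct computation of the CUSUM of a single mean change in coordinate $k$, which occurs at $s_k:=z^*+d_G(j^*,k)$, shows that $s\mapsto[\mathcal{T}(\mu)]_{k,s}$ is unimodal with peak $v_k=a\sqrt{s_k(n-s_k)/n}$ attained at $s=s_k$, having a one-sided slope of magnitude $\tfrac12 a\sqrt{\kappa_k}$ at the peak, where $\kappa_k:=n/\{s_k(n-s_k)\}$. Since coordinate $k$ is evaluated at $t+d_G(j,k)$, at distance $\delta_k:=|t+d_G(j,k)-s_k|$ from $s_k$, and since $\mu_0-\mu_1$ carries a single sign, at the truth all CUSUMs sit at their peaks, giving $S_{j^*,z^*}=\sum_k v_k$, whereas at a general candidate the signal is reduced by $\Delta_{j,t}:=S_{j^*,z^*}-S_{j,t}=\sum_k\bigl(v_k-[\mathcal{T}(\mu)]_{k,t+d_G(j,k)}\bigr)\geq0$.

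First I would record the two deterministic inputs. On the one hand, the assumption $n\tau\geq 2p$ forces every spread time $s_k\in[z^*,z^*+\operatorname{diam}(G)]$ to lie in a window on which $\kappa_k\asymp\kappa$ for a common $\kappa$, and confines the eventual loss to the linear regime of each CUSUM, so that $\Delta_{j,t}\gtrsim a\sqrt{\kappa}\sum_k\delta_k$; candidates whose loss would leave this regime are excluded a fortiori, since there the drop saturates to an $\Omega\bigl(am\sqrt{n\tau}\bigr)$ value that dwarfs the noise. On the other hand, by the definition of $\mathcal{J}_{\hat z,\hat j}(C_1)$ and of $m$, at least $m$ coordinates have $\delta_k\geq C_1\rho$, where $\rho:=|\hat z-z^*|+d_G(\hat j,j^*)$, whence $\sum_k\delta_k\geq C_1 m\rho$; I use the triangle inequality $\delta_k\leq\rho$ only as an upper bound, to stay in the linear regime.

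The stochastic heart of the argument is the control of $N_{j,t}-N_{j^*,z^*}$. Since distinct coordinates carry independent noise and $\{[\mathcal{T}(W)]_{k,s}\}_s$ is, for each $k$, a standardized Gaussian (Brownian-bridge) process, this increment is centred Gaussian with variance $\sum_k\operatorname{Var}\bigl([\mathcal{T}(W)]_{k,t+d_G(j,k)}-[\mathcal{T}(W)]_{k,s_k}\bigr)\asymp\sum_k\kappa_k\delta_k\asymp\kappa\sum_k\delta_k$, using that the increment variance of the CUSUM process over a gap $\delta_k$ scales like $\kappa_k\delta_k$. The $\argmax$ property $L_{\hat j,\hat z}\geq L_{j^*,z^*}$, together with positivity of the signal (valid on the event that the truth's statistic dominates its own noise, which the signal condition secures), yields $N_{\hat j,\hat z}-N_{j^*,z^*}\geq\Delta_{\hat j,\hat z}$. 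Crucially the common factor $\sqrt{\kappa}$ cancels in the Gaussian exponent: for a fixed candidate the probability of this event is at most $\exp\bigl(-c\,\Delta_{j,t}^2/\sigma_{j,t}^2\bigr)$ with $\Delta_{j,t}^2/\sigma_{j,t}^2\gtrsim a^2\sum_k\delta_k\gtrsim a^2 C_1 m\rho$. A union bound over the at most $pn$ candidates then shows that, outside an event of probability $\leq 1/(2pn)$, no candidate with $a^2 m\rho\gtrsim\log(pn)$ can overtake the truth, which is precisely $\rho\lesssim\log(pn)/(a^2 m)$.

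I expect the main obstacle to be the precise variance computation for the CUSUM-of-noise increments and the verification that they scale linearly in $\delta_k$ with the same constant $\kappa_k$ that governs the signal drop; this cancellation is exactly what turns a naive global noise bound into the sharp $a^2\sum_k\delta_k$ exponent, and hence yields $1/(a^2 m)$ rather than a weaker rate. A secondary difficulty is justifying the passage from absolute values to signed quantities—so that the single increment $N_{\hat j,\hat z}-N_{j^*,z^*}$ appears in place of two separate maxima—which needs the truth's signal $\sum_k v_k$ to dominate the noise; here the hypothesis $a\geq c\sqrt{\log(pn)/(mn\tau^2)}$ enters, simultaneously guaranteeing this positivity and keeping $\rho$ inside the linear regime where $\Delta_{j,t}\gtrsim a\sqrt{\kappa}\sum_k\delta_k$ holds. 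The remaining work—assembling the union bound and tracking constants to reach the stated probability $1-1/(2pn)$—is routine.
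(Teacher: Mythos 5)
Your proposal is correct and follows essentially the same route as the paper's proof: the same signal/noise decomposition of the linear statistic, the same lower bound on the signal drop via unimodality and the slope of the population CUSUM restricted to the $m$ coordinates in $\mathcal{J}_{\hat z,\hat j}(C_1)$, the same key observation that the noise increment has variance scaling like $\sum_k \delta_k/(n\tau)$ (the paper's Proposition~\ref{Prop:Variance}), and the same union bound plus exclusion of the saturated regime via the signal condition. If anything, you are slightly more careful than the paper in flagging the passage from $|L_{\hat j,\hat z}|\geq |L_{j^*,z^*}|$ to the signed basic inequality.
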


From this result, we can see that the estimation accuracy of the estimator from the linear statistics achieves the convergence rate of $\log (pn)/ (a^2p)$ and $a^2p$ in the denominator is the $\ell_2$ norm of $\theta$. Similarly rate has also been observed in many change-point results \citep{CM1997}. This condition is also the same as the second term in equation~\ref{Eq:CondSignalGeneral} in Theorem~\ref{Thm:GeneralConsistency}. The following result is an immediate consequence of Theorem~\ref{Thm:LinearStatConsistency} together with Proposition~\ref{Prop:gridcardinality}.
\begin{corollary}
\label{Thm:LinearGridConsistency}
Under the same assumption as in Proposition~\ref{Prop:gridcardinality}. Suppose $X\sim P_{j^*,z^*,\mu_0,\mu_1}$ with $\mu_0-\mu_1\in\{-a\}^p \cup\{a\}^p$. There exist $c, C>0$, depending only on $d$, such that if $a \geq c\sqrt{\log(pn)/(pn\tau^2)}$, then with probability at least $1-1/(2pn)$, the estimator  $(\hat j, \hat z)$ defined in~\eqref{Eq:LinearOptimiser} satisfies that 
\[
|\hat z - z^*| + d_G(\hat j , j^*) \leq \frac{C\log(pn)}{a^2 p}.
\]
\end{corollary}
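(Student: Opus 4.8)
The plan is to read Corollary~\ref{Thm:LinearGridConsistency} off Theorem~\ref{Thm:LinearStatConsistency} by specialising the free parameter $C_1$ and replacing the graph quantity $m = m_G(C_1)$ with the explicit lower bound from Proposition~\ref{Prop:gridcardinality}. Since the grid $G = \prod_{r=1}^d G_r$ with $G_r \cong C_{p_1}$ and $p = p_1^d$ is precisely the setting of that proposition, choosing $C_1 = 1/(4d)$ gives $m_G(1/(4d)) \geq p/8^d$. I would work throughout under $n\tau \geq 2p$, which is what Theorem~\ref{Thm:LinearStatConsistency} requires and which implies the weaker condition $n\tau \geq 2p_1$ of Proposition~\ref{Prop:gridcardinality} (as $p \geq p_1$); this ensures the hypotheses of both cited results are met simultaneously, so that they can be chained.

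First I would translate the signal-strength requirement. Using $1/m \leq 8^d/p$, the bound $\sqrt{\log(pn)/(mn\tau^2)} \leq 8^{d/2}\sqrt{\log(pn)/(pn\tau^2)}$ shows that the hypothesis $a \geq c'\sqrt{\log(pn)/(pn\tau^2)}$ of the corollary, with $c' := c\,8^{d/2}$, already forces the hypothesis $a \geq c\sqrt{\log(pn)/(mn\tau^2)}$ of Theorem~\ref{Thm:LinearStatConsistency}. Since $c'$ depends only on $d$, this matches the statement.

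Next I would transfer the loss bound. On the event of probability at least $1 - 1/(2pn)$ furnished by Theorem~\ref{Thm:LinearStatConsistency}, the inequality $m \geq p/8^d$ gives
\[
|\hat z - z^*| + d_G(\hat j, j^*) \leq \frac{C^*\log(pn)}{a^2 m} \leq \frac{C^* 8^d \log(pn)}{a^2 p}.
\]
Setting $C := C^* 8^d$, again depending only on $d$, yields the claimed inequality and concludes the proof.

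There is no substantive obstacle here, as the result is a direct substitution; the only point demanding attention is the reconciliation of the two $n\tau$ thresholds. Theorem~\ref{Thm:LinearStatConsistency} is stated under $n\tau \geq 2p$ while Proposition~\ref{Prop:gridcardinality} asks only for $n\tau \geq 2p_1$, so strictly invoking both requires working under the stronger condition $n\tau \geq 2p$. I would therefore treat $n\tau \geq 2p$ as the operative assumption (which in the grid setting is the binding constraint, since $p \geq p_1$), after which no probabilistic estimation beyond the two cited statements is required.
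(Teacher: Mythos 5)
Your proposal is correct and matches the paper's intended argument: the paper gives no separate proof, stating only that the corollary is an immediate consequence of Theorem~\ref{Thm:LinearStatConsistency} and Proposition~\ref{Prop:gridcardinality}, which is exactly the substitution $C_1 = 1/(4d)$, $m \geq p/8^d$ that you carry out. Your observation about reconciling the thresholds $n\tau \geq 2p$ and $n\tau \geq 2p_1$ is a fair and careful reading of a point the paper glosses over.
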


\section{Numerical studies}
\label{Sec:Simulations}
\subsection{Deterministic spreading model}
\label{Sec:DeterministicSpread}
In this subsection, we compare our method with other possible ways to locate the change. The first possible way is for each row of the data, we perform a one dimension change point testing, that is, pick out the time point with the largest absolute value of the CUSUM statistics for each coordinate. The earliest time and the coordinate corresponding to that time is the desired change point location. We try two different kinds of change-point locations: in the middle and near the end of the boundary. For the first case, We set $n=200$ and vary $p\in\{100,200,500\}$ and signal size $\mu_j^1-\mu_j^0\in\{0.1,0.2,0.5\}$. For the second case, We set $n=500$ and vary $p\in\{500,800,1000\}$ and signal size $\mu_j^1-\mu_j^0\in \{0.2,0.3,0.4,0.5\}$.  We compare the mean absolute deviation between the estimated and true location of $z^*$ and $j^*$ respectively. Columns $\hat z^*_{SD}$ and $\hat j^*_{SD}$ are mean absolute deviation for $z^*$ and $j^*$ from Algorithm~\ref{alg:split} respectively while columns  $\hat z^*_{coordwise}$and $\hat j^*_{coordwise}$ are results from testing procedure stated above. Table~\ref{table:fix} shows that our method can locate the change point accurately especially when $\mu_j^0$,$\mu_j^0$ grows above $0.2$ in both change-point settings.

\begin{table}[tbhp]
\begin{center}
\begin{tabular}{cccccccc}
\hline\hline
$n$ & $p$ &$z^*$ &signal size& $\hat z^*_{SD}$&$\hat z^*_{coordwise}$ & $\hat j^*_{SD}$ &$\hat j^*_{coordwise}$ \\
\hline
$200$ & $100$&  $100$ & $0.1$ & $25.1$ &  $92.24$ &   $20.79$& $46.08$\\
$200$ & $100$&  $100$  & $0.2$ & $2.07$ & $61.44$ &  $2.35$& $33.35$\\
$200$ & $100$ &  $100$  & $0.5$ & $0.06$ & $28.78$ &  $0.07$& $14.91$\\
$200$ & $200$ &  $100$  & $0.1$ & $23.34$ & $85.28$ &  $33.12$& $87.6$\\
$200$ & $200$ &  $100$  & $0.2$ & $1.72$ & $59.36$ & $1.69$ & $62.19$\\
$200$ & $200$ &  $100$  & $0.5$ & $0.01$ & $29.78$  & $0.01$& $19.92$\\
$200$ & $500$ &  $100$  & $0.1$ & $59.84$ &$87.24$  & $77.93$ & $204.56$\\
$200$ & $500$ &  $100$  & $0.2$ & $4.14$ &$60.92$   &$4.05$ & $110.07$\\
$200$ & $500$ &  $100$  & $0.5$ & $0$ &  $34.61$ & $0$& $26.35$\\
\hline
$500$ & $500$&  $400$   & $0.2$ & $10.4$ & $106.24$ & $10.36$ & $101.2$\\
$500$ & $500$ &  $400$  & $0.3$ & $0.2$ & $98.02$ & $0.16$ & $31.39$\\
$500$ & $500$ &  $400$  & $0.4$ & $0.04$ & $121.48$ & $0.03$ & $30.65$\\
$500$ & $500$ &  $400$  & $0.5$ & $0$ & $131.16$ & $0$ & $30.18$\\
$500$ & $800$&  $400$   & $0.2$ & $51.59$ & $161.95$ & $51.9 $ & $142.05$\\
$500$ & $800$ &  $400$  & $0.3$ & $0.18$ & $160.9$ & $0.15$ & $97.21$\\
$500$ & $800$ &  $400$  & $0.4$ & $0$ & $179.76$ & $0$ & $86.56$\\
$500$ & $800$ &  $400$  & $0.5$ & $0$ & $171.38$ & $0$ & $87.75$\\
$500$ & $1000$&  $400$   & $0.2$ & $77.04$ & $160.7$ & $77.18$ & $173.15$\\
$500$ & $1000$ &  $400$  & $0.3$ & $0.22$ & $158.25$ & $0.2$ & $104.48$\\
$500$ & $1000$ &  $400$  & $0.4$ & $0.02$ & $166.98$ & $0$ & $98.88$\\
$500$ & $1000$ &  $400$  & $0.5$ & $0.01$ & $171.3$ & $0$ & $94.17$\\
\hline\hline
\end{tabular}
\caption{\label{table:fix}Average mean absolute deviation (over 100 repetitions) comparison between different methods. Other parameters used: $j^*=p/2$.}
\end{center}
\end{table}

\subsection{Stochastic spreading model}
\label{Sec:StochasticSpread}
In this subsection, we consider the case when the spread of the change occur independently with probability $q$ each time from an infected node to each of its neighbours. In this case, we can modify our existing methodology, which monitors for deterministic spreading of the change as follows. if the probability $q$ is known, then we can adjust the distance between coordinates $j$ and $k$ as the expected time that a change spreading from source coordinate $j$ will reach $k$ under this stochastic model. For a line graph $G = C_p$, this would simply be $d_G(j,k) / q$. When $q$ is unknown, we may search over a grid $\mathcal{Q}$ of $q$ values in $[0,1]$, compute the test statistics $\max_{j,t} Q^{(q)}_{j,t}$ for each $q$ as in~\eqref{Eq:QuadraticEstimator} with this adjusted distance metric and then choose the optimal $q$ by $\hat q := \argmax_{q\in\mathcal{Q}} \max_{j,t} Q^{(q)}_{j,t}$. The final estimator for the source coordinate and the time of change-point is defined as $(\hat j, \hat t):=\argmax_{(j,t)} Q^{(\hat q)}_{j,t}$.  In Table~\ref{table:compare}, we compare the performance of the method described above (denoted by rSD) and the vanilla SpreadDetect algorithm (denoted by SD), together with the baseline coordinatewise procedure mentioned in Section~\ref{Sec:DeterministicSpread}. 
We set the true probability of change spread to $q=0.5$, and search over the grid $\mathcal{Q} = \{0.1,0.2,\ldots,0.9\}$ and vary $n$, $p$, $z^*$ and $j^*$. We see that the modified SpreadDetect algorithm described in this subsection has the best performance over the wide range of parameter settings considered.

\begin{table}[tbhp]
\begin{center}
\begin{tabular}{ccccccccccc}
n & p &$z^*$ & $j^*$ &signals& $\hat z^*_{SD}$& $\hat z^*_{rSD}$ &$\hat z^*_{coordwise}$ & $\hat j^*_{SD}$ &$\hat j^*_{rSD}$ &$\hat j^*_{coordwise}$ \\
\hline\hline
$200$ & $100$ & $100$ & $50$ & $0.2$ & 17.29 & $9.05$ & $97.79$ & $6.64$ & $3.35$ & $48.59$\\
$200$ & $100$ & $100$ & $50$ & $0.3$ & 27.86 & $4.35$ & $83.97$ & $3.41$ & $1.89$ & $41.79$\\
$200$ & $100$ & $100$ & $50$ & $0.4$ & 16.67 & $3.57$ & $41.47$ & $2.53$ & $1.62$ & $21.46$\\
$200$ & $200$ & $100$ & $100$ & $0.2$ & 19.23 & $20.67$ & $96.43$ & $23.07$ & $16.33$ & $96.29$\\
$200$ & $200$ & $100$ & $100$ & $0.3$ & 19.10 & $5.79$ & $88.02$ & $11.24$ & $2.38$ & $85.25$\\
$200$ & $200$ & $100$ & $100$ & $0.4$ & 17.07 & $3.66$ & $52.35$ & $4.6$ & $1.81$ & $46.18$\\
$200$ & $500$ & $100$ & $250$ & $0.2$ & 61.09 & $44.43$ & $98.15$ & $77.36$ & $42.95$ & $246$\\
$200$ & $500$ & $100$ & $250$ & $0.3$ & 39.23 & $7.82$ & $87.6$ & $42.23$ & $5.3$ & $209.06$\\
$200$ & $500$ & $100$ & $250$ & $0.4$ & 22.59 & $4.16$ & $46.55$ & $10.6$ & $1.71$ & $83.5$\\
$500$ & $200$ & $250$ & $100$ & $0.2$ & 41.69 & $6.68$ & $152.61$ & $5$ & $2.61$ & $64.23$\\
$500$ & $200$ & $250$ & $100$ & $0.3$ & 41.47 & $5.77$ & $29.53$ & $3.92$ & $2.36$ & $14.77$\\
$500$ & $200$ & $250$ & $100$ & $0.4$ & 41.15 & $5.39$ & $43.72$ & $3.37$ & $2.39$ & $12.28$\\
$500$ & $500$ & $250$ & $250$ & $0.2$ & 43.69 & $5.76$ & $170.54$ & $7.64$ & $2.46$ & $151.04$\\
$500$ & $500$ & $250$ & $250$ & $0.3$ & 42.34 & $5.09$ & $35.02$ & $6.16$ & $2.38$ & $14.48$\\
$500$ & $500$ & $250$ & $250$ & $0.4$ & 43.02 & $5.19$ & $44.59$ & $4.99$ & $2.41$ & $13.11$\\
\hline\hline
\end{tabular}
\caption{\label{table:compare}Average mean absolute deviation (over 100 repetitions) comparison between different methods for estimating the time of change-point and source coordinate under a stochastic spreading model described in Section~\ref{Sec:StochasticSpread}}
\end{center}
\end{table}

\subsection{Real data example}
We now apply Algorithm~\ref{alg:split} to the data set of weekly death between January 2017 and December 2020 in United States. The aim is to find the time of the change in number of deaths and state where the change first occurs. We exclude two states: Alaska and Hawaii in our analysis as they have no adjacent states. To form the adjacency matrix, if two states are adjacent to each other, then we assign the corresponding entry with $1$, otherwise, the entries are $0$. Before applying Algorithm~\ref{alg:split} to the data, we first remove the seasonal trend from the data. Specifically, we use the data up to 30 June 2019 as the training data and estimate the daily death by averaging the weekly total death and then use a Gaussian Kernel with bin width of $20$ to estimate the deaths  on each day of a year. As daily death follows Poisson distribution, we stabilize the variance by applying a square root transformation. Then we calculate the difference between actual data with the fitted data and standardize it using the mean and standard deviation of the calculated difference.

We apply Algorithm~\ref{alg:split} to the pre-processed data set. The resulting time is 7 March 2020, and the state which first start to change is Pennsylvania. The date matches the actual situation, as during that time, death due to COVID 19 began to occur. Figure~\ref{fig:real} shows the aggregated CUSUM statistics with the states arranged such that Pennsylvania is in the centre and the graph distance increases as we move towards top and bottom of the plot. The heatmap shown in the figure is consistent with a change spreading from Pennsylvania. However, we remark that the conclusion here should be treated with caution for two reasons. Firstly, this is a weekly recorded data and the frequency of recordinly is likely to be inadequate to capture the rapid spreading of the disease across multiple states. Secondly, we computed the distance between states by the number of state boarders one needs to cross from one to the other. While this is a proxy for the distance between states during the pandemic spread, a better measure would involve for instance the number of passengers crossing from one state to another, though the latter data are difficult to obtain. 

\begin{figure}[tbp]
\centering
\includegraphics[width=11cm]{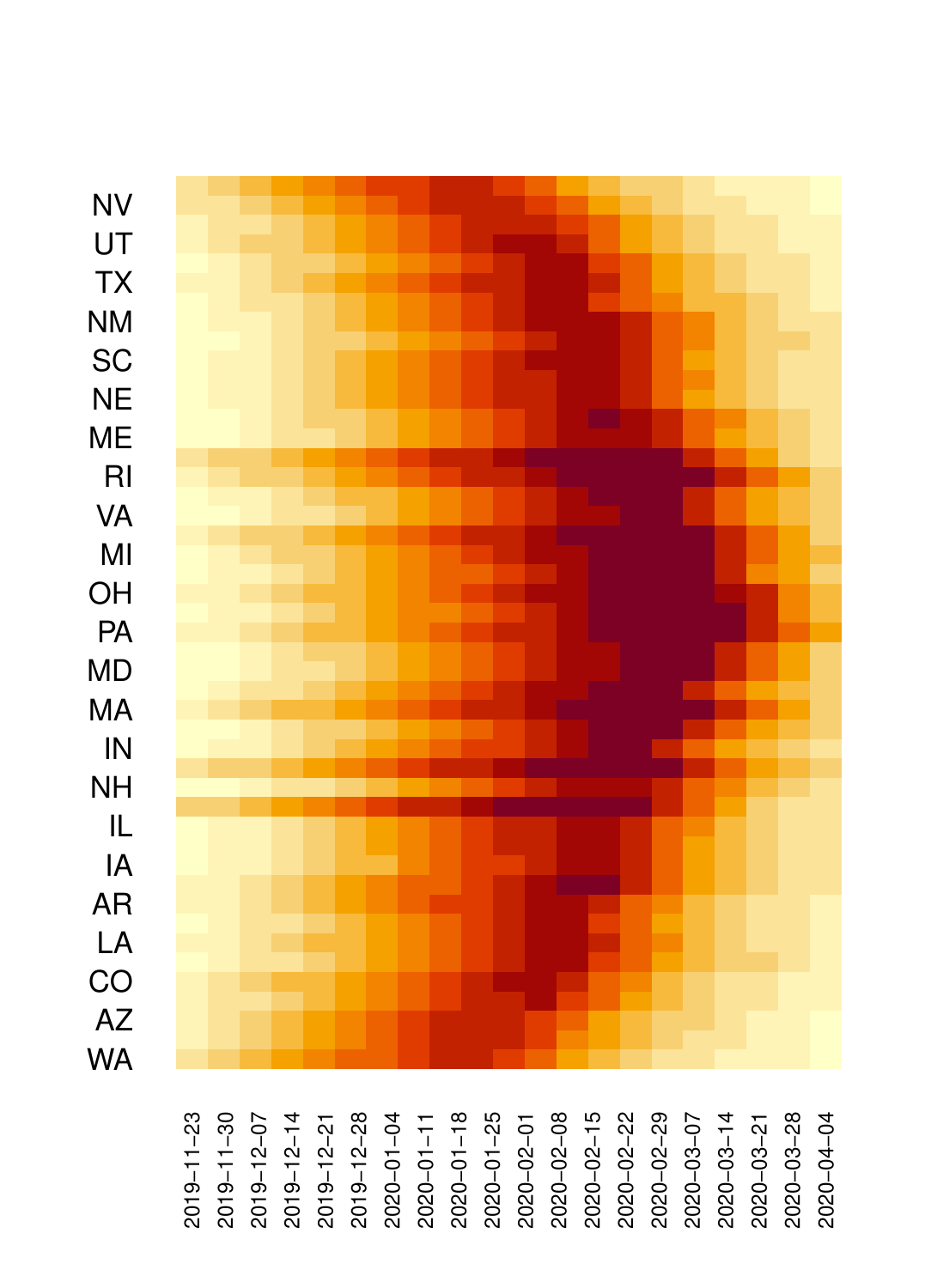} 
\caption{\label{fig:real}Aggregated CUSUM statistics for 46 states during 2019-11-23 to 2020-04-04.}
\end{figure}

\appendix
\section{Proof of main results}
\label{Sec:Proofs}
\begin{proof}[Proof of Theorem~\ref{Thm:GeneralConsistency}]
Let $A_{j,t}$ be the entries of $A = \mathcal{T}(\boldsymbol{\mu})$ then for $j\in [p]$, we have 
\[
A_{j,t}=
\begin{cases}
  \sqrt{\frac{t}{n(n-t)}}(n-z^*-d_G(j,j^*))\theta_j, & \text{ if $t\leq z^*+d_G(j,j^*)$}, \\ \sqrt{\frac{n-t}{nt}}(z^*+d_G(j,j^*))\theta_j, &\text{ if $t>z^*+d_G(j,j^*)$}.
  \end{cases}
\]
 Since the test statistic is unchanged by flipping signs in any one row of the data, we may assume without loss of generality that $\theta_j>0$ for all $j$. 
 
Fix $k^* \in[p]$ and $t^* \in[n-1]$, we definte 
\[
B_{k^*,t^*}:= \sum_{ j\in\mathcal{J}_{k^*,t^*}} A_{j, t^*+d_G(k^*,j)}^2.
\]
For each $j$ such that $t^*+d_G(k^*,j) < n$, we have $T_{j,t^*+d_G(k^*,j)}\sim N(A_{j,t^*+d_G(k^*,j)},1)$ and obtain that $Q_{k^*, t^*} + |\mathcal{J}_{k^*,t^*}| \sim \chi^2_{\mathcal{J}_{k^*,t^*}} (B_{k^*,t^*})$. Therefore, by \citet[Lemma~8.1]{birge2001alternative}, for each $j\in[p]$ and $t\in[n-1]$, we have for any $\delta \in (0,1)$ that 
\[
\mathbb{P}\Bigl( |Q_{k^*,t^*} - B_{k^*,t^*}| >   2\sqrt{(|\mathcal{J}_{k^*,t^*}|+2B_{k^*,t^*})\log(2/\delta)}+2\log(2/\delta)\Bigr)\leq \delta.
\]
Taking a union bound over $k^*\in[p]$ and $t^*\in[n-1]$ of the above inequality, we therefore obtain that with probability at least $1-\delta$,
\begin{align}
B_{j^*, z^*} &-2\sqrt{(|\mathcal{J}_{j^*,z^*}|+2B_{j^*,z^*})\log(2pn/\delta)}-2\log(2pn/\delta)\leq Q_{j^*,z^*} \leq Q_{\hat j, \hat z} \nonumber\\
&\hspace{2cm}\leq  B_{\hat j, \hat z} + 2\sqrt{(|\mathcal{J}_{\hat j,\hat t}|+2B_{\hat j,\hat t})\log(2pn/\delta)}+2\log(2pn/\delta).\label{Eq:Thm1tmp1}
\end{align}
Notice that for every $k^*\in[p]$ and $t^*\in[n-1]$, we have $|\mathcal{J}_{k^*,t^*}|\leq p$ and 
\[
B_{k^*,t^*} \leq \sum_{j\in[p]} A_{j,z^*+d_G(j,j^*)}^2\leq \sum_j \theta_j^2 \frac{(z^*+d_G(j,j^*))(n-z^*-d_G(j,j^*))}{n} \leq \frac{n\|\theta\|_2^2}{2}.
\]
Thus, after rearranging~\eqref{Eq:Thm1tmp1}, we have with probability at least $1-\delta$ that 
\begin{equation}
\label{Eq:BjUpper}
B_{j^*, z^*} - B_{\hat j, \hat z} \leq 4\sqrt{(p+n\|\theta\|_2^2)\log(2pn/\delta)}+4\log(2pn/\delta).
\end{equation}

On the other hand, we can obtain a lower bound of $B_{j^*,z^*}-B_{\hat j, \hat z}$ as follows. For each $j\in [p]$, the sequence $(A_{j,t})_t$ is unimodal with a single peak at $z^*+d_G(j,j^*)$. Moreover, since $d_G(j,j^*) \leq p \leq n\tau/2$, we have
\begin{equation}
\label{Eq:Apeak}
\theta_j A_{j,z^*+d_G(j,j^*)}=\theta_j^2\sqrt{\frac{(z^* + d_G(j,j^*) (n- z^* - d_G(j,j^*))}{n}}\geq \theta_j^2\sqrt{\frac{(n\tau/2)(n/2)}{n}}\geq \frac{\theta_j^2\sqrt{n\tau}}{2}.
\end{equation}
Therefore, by \citet[Lemma~7]{wang2016highdimensional}, we have for  each $j\in \mathcal{J}_{\hat z,\hat j}(C_1)$ that 
\begin{align}
\label{Eq:Aj_diffGeneral}
\theta_j (A_{j,z^*+d_G(j,j^*)} - A_{j,\hat{z}+d_G(j,\hat j)}) &\geq \frac{2\theta_j^2}{3\sqrt{6 n\tau}}\min\biggl\{|z^*+d_G(j,j^*)-\hat{z}-d_G(j,\hat j)|, \frac{n\tau}{2} \biggr\}\nonumber\\
 & \geq \frac{2\theta_j^2}{3\sqrt{6 n\tau}}\min\biggl\{C_1 (|z^*-\hat z|+d_G(j^*,\hat j)), \frac{n\tau}{2}\biggr\},
\end{align}
where we have used the definition of $\mathcal{J}_{\hat z, \hat j}(C_1)$ from~\eqref{Eq:mathcalJ} in the final bound. Combining~\eqref{Eq:Aj_diffGeneral} with~\eqref{Eq:Apeak}, we obtain that 
\begin{align}
B_{j^*,z^*}-B_{\hat j,\hat z} &\geq \sum_{j\in \mathcal{J}_{\hat z, \hat j}(C_1)} (A_{j,z^*+d_G(j,j^*)}^2 - A_{j, \hat z+d_G(j,\hat j)}^2) \nonumber\\
&\geq \sum_{j\in \mathcal{J}_{\hat z, \hat j}(C_1)} A_{j,z^*+d_G(j,j^*)}(A_{j,z^*+d_G(j,j^*)}-A_{j, \hat z+d_G(j,\hat j)})\nonumber\\
 &\geq \frac{2a^2 m}{3\sqrt{6}} \min\biggl\{C_1 (|z^*-\hat z|+d_G(j^*,\hat j)), \frac{n\tau}{2}\biggr\},\label{Eq:BjLower}
\end{align}
where we have used the fact that $|\mathcal{J}_{\hat z, \hat j}(C_1)| \geq m$ in the final inequality. 
Combining~\eqref{Eq:BjUpper} and~\eqref{Eq:BjLower}, and choosing $\delta = 1/(2pn)$, we have with probability at least $1-1/(2pn)$ that 
\begin{equation}
\label{Eq:BasicInequalityGeneral}
\frac{2a^2m}{3\sqrt{6}} \min\biggl\{C_1 (|z^*-\hat z|+d_G(j^*,\hat j)), \frac{n\tau}{2}\biggr\}\leq 4\sqrt{p+2npa^2\log(2pn)}+8\log(2pn).
\end{equation}
From condition~\eqref{Eq:CondSignalGeneral}, we have
\[
a^2 \geq \max\biggl\{\frac{c(\sqrt{p}+\log(2pn))}{n\tau m}, a\sqrt{\frac{cp\log(2pn)}{n\tau^2m^2}} \biggr\} \geq \frac{c\sqrt{p} + c\log(2pn) + \sqrt{cnpa^2\log(2pn)}}{2n\tau m},
\]
which for sufficiently large $c$ implies that the minimum on the left-hand side of~\eqref{Eq:BasicInequalityGeneral} is achieved by the first term. Consequently, we derive from~\eqref{Eq:BasicInequalityGeneral} that with probability at least $1-1/(2pn)$,
\[
|z^*-\hat z| + d_G(j^*-\hat j) \leq \frac{12\sqrt{6}}{C_1}   \biggl\{ \frac{\sqrt{p}+\log(2pn)}{a^2 m} + \frac{\sqrt{pn \log(2pn)}}{a m}\biggr\},
\]
as desired.
\end{proof}

\begin{proof}[Proof of Proposition~\ref{Prop:gridcardinality}]
Denote $G_r$ to be the $r$th copy of $C_{p_1}$ making up $G$, i.e.\ $G = \prod_{r=1}^d G_r$. Each vertex $j \in G$ can be represented by a $d$-tuple of coordinates $(\pi_1(j), \ldots, \pi_d(j))$, where $\pi_r(j) \in V(G_r) = [p_1]$. Define $\ell(j) = \ell_G(j) := (z^*+d_G(j,j^*)) - (t^* + d_G(j, k^*))$, by Proposition~\ref{Prop:SetJ1}, we have that each of the following set 
\begin{align*}
  \mathcal{J}_r := \biggl\{\tilde j: \mathrm{sgn}(z^*-t^*)\ell_{G_r}(\tilde j)\geq \frac{|z^*-t^*|+d_{G_r}(\pi_r(j^*),\pi_r(k^*))}{4}\biggr\},  
\end{align*}
has cardinality at least $p_1/8$. Then, for all $j\in \mathcal{J}:=\prod_{r=1}^d \mathcal{J}_r$, when $z^*\geq t^*$, we have:
\begin{align*}
    z^*-t^*+d_G(j,j^*)-d_G(j,k^*)&= \sum_{r=1}^d \biggl\{\frac{z^*-t^*}{d} + d_{G_r}(\pi_r(j),\pi_r(j^*))-d_{G_r}(\pi_r(j),\pi_r(k^*))\biggr\}\\
    &\geq  \sum_{r=1}^d \frac{\ell_{G_r}(j)}{d}\geq  \sum_{r=1}^d \frac{z^*-t^*+d_{G_r}(\pi_r(j^*),\pi_r(k^*))}{4d}\\
    &\geq \frac{z^*-t^*}{4d}+\frac{d_G(j^*,k^*)}{4d}
\end{align*}
Similarly, if $t^*>z^*$, we have 
\[
t^*-z^* + d_G(j,k^*) - d_G(j,j^*) \geq \frac{t^*-z^*}{4d} + \frac{d_G(j^*,k^*)}{4d}.
\]
Overall, we have for $j\in\mathcal{J}$ that 
\[
|\ell_G(j)| = |z^*-t^* + d_G(j,j^*)-d_G(j,k^*)| \geq \frac{|z^*-t^*|}{4d} + \frac{d_G(j^*,k^*)}{4d}.
\]
Hence, $m_G(1/(4d)) \geq |\mathcal{J}| = \prod_{r=1}^d|\mathcal{J}_r|\geq (p_1/8)^d = p/8^d$ as desired.
\end{proof}

\begin{proof}[Proof of Theorem~\ref{thm:testing}]
    If $\theta=0$, then $T_{j,t}\sim N(0,1)$ for all $t\in[n-1]$ and $j\in[p]$ and hence $Q_{j,t} + |\mathcal{J}_{j,t}| \sim \chi^2_{|\mathcal{J}_{j,t}|}$. By \citet[Lemma 1]{LM2000} together with a union bound, we have that 
    \begin{align*}
        \mathbb{P}(\max_{j\in[p],t\in[n-1]} Q_{j,t} \geq \lambda)&\leq\sum_{j=1}^p\sum_{t=1}^{n-1} \mathbb{P}(Q_{j,t}\geq \lambda)\\
        &\leq \sum_{j=1}^p\sum_{t=1}^{n-1} \mathbb{P}\bigl\{Q_{j,t}\geq 2\sqrt{|\mathcal{J}_{j,t}|\log(pn/\delta)}+2\log(pn/\delta)\bigr\}\leq \delta.
    \end{align*}
    This establishes part (a). For part (b), let $A_{j,t}$ and $B_{j,t}$ be defined as in the proof of Theorem~\ref{Thm:GeneralConsistency}. Note that under the alternative hypothesis, $Q_{j,t} + |\mathcal{J}_{j,t}| \sim \chi^2_{|\mathcal{J}_{j,t}|}(B_{j,t})$. Hence, by \citet[Lemma~8.1]{birge2001alternative}, we have
    \[
    \mathbb{P}\Bigl\{Q_{j^*,z^*} \geq B_{j^*,z^*} - 2\sqrt{(|\mathcal{J}_{j^*,t^*}| + 2B_{j^*,z^*})\log(1/\delta)}\Bigr\} \geq 1- \delta. 
    \]
  Under the assumption that $B_{j^*,z^*} \geq 8\lambda$, we have
  \begin{align*}
  B_{j^*,z^*} - 2\sqrt{(|\mathcal{J}_{j^*,t^*}| + 2B_{j^*,z^*})\log(1/\delta)} &\geq   B_{j^*,z^*} - 2\sqrt{p\log(1/\delta)} - 2\sqrt{2 B_{j^*,z^*}\log(1/\delta)}\\
  & \geq B_{j^*,z^*}  - \lambda - 2\sqrt{B_{j^*,z^*} \lambda}\\
  & = (\sqrt{B_{j^*,z^*}} - \sqrt{\lambda})^2 - 2\lambda \geq  \lambda.
  \end{align*}

 Since 
 \[
A_{j,z^*+d_G(j,j^*)}^2=\theta_j^2\frac{(z^* + d_G(j,j^*) (n- z^* - d_G(j,j^*))}{n}\geq \theta_j^2\frac{(n\tau/2)(n/2)}{n}\geq \frac{\theta_j^2n\tau}{4},
 \]
 there are at least $n\tau/2$ points with $d_G(j,j^*)\leq n\tau/2$,
 \[
 B_{j^*,z^*}=\sum_{j\in\mathcal{J}_{j^*,z^*}} A_{j, z^*+d_G(j^*,j)}^2\geq \min\bigl(p,\frac{n\tau}{2}\bigr)\frac{a^2n\tau}{4}=\frac{a^2n\tau\min(2p,n\tau)}{8 }
 \]
 Then, for $a^2\geq 64\lambda/(n\tau\min\{2p,n\tau\})$, we have the desired result.
\end{proof}

\begin{proof}[Proof of Theorem~\ref{Thm:LowerBound}]
Fix $j^*\in \argmin_{v\in V(G)} \max_{w\in V(G)} d_G(v,w)$ and $z^* = n - \lceil n\tau\rceil$.  Let $\pi$ be the uniform distribution on $\{-a,a\}^p$. For notational simplicity, define $P_0 := P_{j^*,z^*,0,0}$ and $P_1:= \int P_{j^*,z^*,0,\mu_1} d\pi(\mu_1)$. Then, for any test function $\psi$, we have 
\begin{align}
 \sup_{(j^*,z^*,\mu_0,\mu_1)\in\Theta_0} & P_{j^*,z^*,\mu_0,\mu_1}(\psi = 1) + \sup_{(j^*,z^*,\mu_0,\mu_1)\in\Theta_1} P_{j^*,z^*,\mu_0,\mu_1}(\psi = 0)\nonumber\\ 
 & \geq P_0(\psi=1) + P_1(\psi=0) \geq 1 - d_{\mathrm{TV}}(P_0, P_1)\nonumber\\
 & = 1 - \frac{1}{2}\int \biggl|\frac{dP_1}{dP_0} - 1\biggr|\,dP_0 \geq 1 - \frac{1}{2}\biggl\{\int \biggl(\frac{dP_1}{dP_0} - 1\biggr)^2\,dP_0\biggr\}^{1/2}\nonumber\\
 & = 1 - \frac{1}{2}\biggl\{\int \biggl(\frac{dP_1}{dP_0}\biggr)^2 \,dP_0 - 1\biggr\}^{1/2}.\label{Eq:ChiSquaredDivergence}
\end{align}
Let $\boldsymbol{\mu}$ be the conditional mean of $X$ given $\mu_1$ under $P_{j^*,z^*,0,\mu_1}$ and let $\tilde{\boldsymbol{\mu}}$ be an independent copy of $\boldsymbol{\mu}$.  By \citet{Ingster}, we have for some independent Rademacher random variables $\xi_1,\ldots,\xi_p$ that
\begin{align*}
\int \biggl(\frac{dP_1}{dP_0}\biggr)^2 \,dP_0 &= 
\mathbbm{E} \exp \langle \mu, \tilde\mu\rangle  =\mathbbm{E}\exp\bigg(\sum_{j=1}^p \max\{n-z^*-d_G(j,j^*), 0\}a^2 \xi_j\bigg)\\
&=\prod_{j=1}^p \Bigl[\frac{1}{2} e^{\max\{n-z^*-d_G(j,j^*),0\}a^2} + \frac{1}{2} e^{-\max\{n-z^*-d_G(j,j^*),0\}a^2}\Bigr]\\
 &\leq\prod_{j=1}^p e^{\max\{n-z^*-d_G(j,j^*),0\}^2a^4/2} \leq e^{2pn^2\tau^2a^4} \leq 2,
\end{align*}
where the first inequality follows from the fact that $(e^x+e^{-x})/2\leq e^{x^2/2}$ for all $x\in \mathbbm{R}$ and the second bound uses the fact that $n-z^*-d_G(j,j^*) \leq \lceil n\tau\rceil \leq 2n\tau$. Finally, substituting the above inequality into~\eqref{Eq:ChiSquaredDivergence} we arrive at the desired conclusion.
\end{proof}

\begin{proof}[Proof of Theorem~\ref{Thm:LinearStatConsistency}]
From the definition of $(\hat j, \hat z)$, we have $\sum_{j\in\mathcal{J}_{\hat z, \hat j}(C_1)} (A_{j,z^*+d_G(j,j^*)}+E_{j,z^*+d_G(j,j^*)})\leq \sum_{j\in\mathcal{J}_{\hat z, \hat j}(C_1)} (A_{j,\hat z+d_G(j,\hat j)}+E_{\hat z+d_G(j,\hat j)})$, which can be combined with Proposition~\ref{Prop:Variance} to obtain that for some universal constant $C_2 > 0$, we have with probability at least $1-1/(pn)$ that 
\begin{equation}
\label{Eq:A_gap_upper}
    \sum_{j\in\mathcal{J}_{\hat z, \hat j}(C_1)} (A_{j,z^*+d_G(j,j^*)}-A_{j,\hat z +d_G(j,\hat j)})\leq C_2 \biggl\{|\mathcal{J}_{\hat z, \hat j}(C_1)|\log (pn) \frac{|z^*-\hat z|+d_G(j^*,\hat j)}{n\tau}\biggr\}^{1/2}.
\end{equation}
On the other hand, by~\eqref{Eq:Aj_diffGeneral}, we have  
\begin{align}
 \sum_{j\in\mathcal{J}_{\hat z, \hat j}(C_1)} (A_{j,z^*+d_G(j,j^*)}-A_{j,\hat z +d_G(j,\hat j)}) \geq \frac{2a|\mathcal{J}_{\hat z, \hat j}(C_1)|}{3\sqrt{6 n\tau}}\min\biggl\{C_1 (|z^*-\hat z|+d_G(j^*,\hat j)), \frac{n\tau}{2}\biggr\}.\label{Eq:LinearLower}
\end{align}
Combining~\eqref{Eq:A_gap_upper} and~\eqref{Eq:LinearLower}, we have with probability at least $1-1/(pn)$ that 
\begin{equation}
\label{Eq:BasicInequalityLinear}
\frac{2a|\mathcal{J}_{\hat z, \hat j}(C_1)|^{1/2}}{3\sqrt{6\log(pn)}} \min\biggl\{C_1 (|z^*-\hat z|+d_G(j^*,\hat j)), \frac{n\tau}{2}\biggr\}\leq  C_2 \bigl\{|z^*-\hat z|+d_G(j^*,\hat j)\bigr\}^{1/2} .
\end{equation}
We claim that when $c\geq 6\sqrt{3}C_2$, the minimum on the left-hand side above cannot
be achieved at $\frac{n\tau}{2}$. Indeed, from the assumption on $a$,
we have
\begin{align*}
\frac{2a|\mathcal{J}_{\hat z, \hat j}(C_1)|^{1/2}}{3\sqrt{6\log(pn)}}\frac{n\tau}{2} &\geq \frac{c\sqrt{n}}{3\sqrt{6}} \geq C_2\sqrt{2n} > C_2 \bigl\{|z^*-\hat z|+d_G(j^*,\hat j)\bigr\}^{1/2}.
\end{align*}
Therefore, we have that 
\[
|z^*-\hat z|+d_G(j^*,\hat j)\leq \frac{27C_2^2 \log (pn)}{2C_1^2 a^2 |\mathcal{J}_{\hat z, \hat j}(C_1)|}\leq \frac{C^*\log(pn)}{a^2m}.
\]
\end{proof}

\section{Ancillary results}
\label{Sec:Ancillary}
We first show that when $G = C_p$, a cycle graph, for a nontrivial fraction of coordinates $j\in[p]$, the difference $\ell(j) = \ell_G(j) := (z^*+d_G(j,j^*)) - (t^* + d_G(j, k^*))$ is large in absolute value. % Specifically, we focus on the single graph $G_r$ which is a $p_1$-cycle $C_{p_1}$. For simplicity, we replace $G_r$ with $G$ and $p_1$ with $p$ in Proposition~\ref{Prop:SetJ} and Proposition~\ref{Prop:SetJ1}. We remark that we are assuming $d_G(j,k)=|j-k|$ in Proposition~\ref{Prop:SetJ} and Proposition~\ref{Prop:SetJ1} which may not be true if $j$,$k$ are close to $1$ and $p$. However, since $|j-k|\geq d_G(j,k)$, the results still hold.

\begin{prop}
\label{Prop:SetJ}
Let $G = C_p$ be a $p$-cycle graph. Let $\tau = \min\{z^*/n, 1-z^*/n\}$. Assuming that $n\tau\geq 16$ and $p\geq 4$, the following set
\[
   \mathcal{J} := \biggl\{j: |\ell(j)|\geq \min\biggl(\frac{n\tau}{16}, \frac{|z^*-t^*|+d_G(j^*,k^*)}{4}\biggr)\text{ and } d_G(j , j^*) \leq \frac{n\tau}{2} \biggr\}           
\]
has cardinality at least $\min(p, n\tau)/32$.
\end{prop}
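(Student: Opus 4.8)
The plan is to exploit the explicit form of the graph distance on a cycle. Write $s := z^*-t^*$ and $D := d_G(j^*,k^*)$, put $\theta := \min\{n\tau/16,\,(|s|+D)/4\}$ and $R := n\tau/2$, so that $\mathcal{J} = \{j : |\ell(j)|\geq\theta \text{ and } d_G(j,j^*)\leq R\}$. Relabelling the vertices of $C_p$ cyclically so that $j^* = 0$ and using the reflection symmetry of the cycle about $j^*$, I may assume $k^*$ sits at index $D$ with $1\leq D\leq\lfloor p/2\rfloor$; the degenerate case $D=0$ gives $\ell\equiv s$ and is immediate. I would then compute $g(j):=d_G(j,j^*)-d_G(j,k^*)$ directly and observe that it is a \emph{trapezoidal} function as $j$ runs around the cycle: $g\equiv -D$ on the arc of length $\lfloor p/2\rfloor-D$ lying counterclockwise from $j^*$ (the ``low plateau'', adjacent to $j^*$), $g$ rises with slope $2$ along $j=1,\dots,D$ (the ``up-ramp'' from $j^*$ to $k^*$), $g\equiv +D$ on the arc past $k^*$ (the ``high plateau'', at distances $D+1,\dots,\lfloor p/2\rfloor$ from $j^*$), and $g$ descends with slope $2$ back to $-D$ on the remaining arc.

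Since $\ell(j)=s+g(j)$, a vertex fails the signal condition exactly when $g(j)$ lies in the open interval $I:=(-s-\theta,\,-s+\theta)$ of length $2\theta$ centred at $-s$. The crux of the proof is that $I$ contains at most one of the two plateau heights $\pm D$: if it contained both then $|s+D|<\theta$ and $|s-D|<\theta$, whence $2D\leq|s+D|+|s-D|<2\theta$ and likewise $2|s|<2\theta$, so $|s|+D<2\theta$, contradicting $\theta\leq(|s|+D)/4$ (the case $|s|+D=0$ being trivial). Hence at least one entire plateau consists of good vertices, while each ramp loses at most $\theta+1$ vertices to $I$, since $g$ increments by $2$ along it.

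It remains to count good vertices within distance $R$ of $j^*$, splitting on the size of $n\tau$ relative to $p$. If $n\tau\leq p$ then $R\leq p/2$ and $\min(p,n\tau)=2R$, so the admissible region is an arc of $2R+1$ vertices; I bound the good count below by combining the good part of the up-ramp, of size at least $\min(R,D)-\theta-1$, with whichever plateau is good and meets the distance constraint, and verify in the subcases $D\geq R$ and $D<R$ that the total exceeds $R/16=\min(p,n\tau)/32$, the slack coming from $\theta\leq n\tau/16=R/8$. If instead $n\tau>p$ then every vertex lies within distance $\lfloor p/2\rfloor\leq R$ of $j^*$, so the distance constraint is vacuous; I bound the good count by $p$ minus the bad count and use that a nonempty bad set forces $|s|\leq D+\theta$, hence $\theta\leq 2D/3$, after which the surviving plateau together with the two ramps retains $\gtrsim p=\min(p,n\tau)$ good vertices, which the constant $32$ easily accommodates.

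The step I expect to be the main obstacle is the asymmetry created by the distance constraint, which is anchored at $j^*$ while the plateau guaranteed to be good may be the \emph{high} plateau at distance $\geq D$ from $j^*$; this plateau contributes nothing once $D$ exceeds $R$. The resolution is to notice that precisely when the near (low) plateau is short---equivalently $D$ is large---the up-ramp is correspondingly long, so the ramp contribution makes up the deficit. Carrying this out uniformly, while tracking the $\lfloor\cdot\rfloor$ rounding at the arc boundaries and the antipodal edge case $D=\lfloor p/2\rfloor$ (where the plateaus degenerate and only the two ramps survive), is where the bookkeeping concentrates; the generous constant $32$ and the cap at $n\tau/16$ in the threshold are what absorb the per-ramp losses and let every case close.
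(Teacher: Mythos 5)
Your plan is sound and, modulo the deferred bookkeeping, it closes; the obstacles you flag are the real ones and your resolutions work. Both your argument and the paper's rest on the same foundation: normalising so that $j^*$ and $k^*$ sit on a common arc and writing $\ell$ explicitly as the piecewise-linear ``trapezoid'' (the paper's equation for $\ell(j)$ is exactly your two plateaus at $(z^*-t^*)\mp(k^*-j^*)$ joined by slope-$2$ ramps). Where you genuinely diverge is in the organising device. The paper locates the zero-crossing of $\ell$ by a four-way sign analysis on $k^*-t^*\gtrless j^*-z^*$ and $k^*+t^*\gtrless j^*+z^*$ (with three further sub-cases in its Case~2), and in each case exhibits an explicit good arc $A$ by evaluating $\ell$ at hand-picked points such as $\lceil(h^*+k^*)/2\rceil$. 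You instead observe that the bad set is the $g$-preimage of an open interval $I$ of length $2\theta\leq(|s|+D)/2$, so $I$ cannot contain both plateau heights $\pm D$ (your triangle-inequality step is correct: both would force $|s|+D<2\theta$), and each ramp loses at most $\theta+1$ vertices because consecutive ramp values differ by $2$. This single lemma replaces the paper's sign casework and makes the per-ramp loss accounting uniform; the price is that your remaining case split ($D\gtrless R$, $n\tau\gtrless p$, which plateau survives) is driven by the distance constraint rather than by the signal, and you correctly identify that the delicate configuration is a good \emph{far} plateau excluded by $d_G(j,j^*)\leq n\tau/2$, compensated by the up-ramp of length $\min(D,R)$ losing only $\theta\leq R/8$ vertices. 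Two small points to nail down in a full write-up: in the $n\tau>p$ branch your count $p/2+D-2\theta-O(1)$ only clears $p/32$ for $p$ above a small threshold, so you should invoke the fact that each plateau contains $j^*$ or $k^*$ itself (hence $|\mathcal{J}|\geq 1\geq p/32$ for small $p$); and you should confirm that the ``at most $\theta+1$ bad per ramp'' count is applied to the ramp \emph{restricted} to $d_G(j,j^*)\leq n\tau/2$, since all bad ramp vertices may fall inside that window.
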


\begin{proof}
Without loss of generality, we may assume by symmetry that $j^* = \lceil p/2 \rceil$ and $k^* \geq j^*$. This choice is convenience since $d_G(j, j^*) = |j - j^*|$. With this choice, we can write
\begin{equation}
\label{Eq:ell}
 \ell(j) = \begin{cases}
            (z^*-t^*) + (k^*-j^*) -2j & 1\leq j \leq k^*-j^*\\
            (z^*-t^*) - (k^*-j^*) & k^* - j^* \leq j \leq j^*\\
            (z^*-t^*)+ (k^*-j^*) - 2(k^*-j) & j^* \leq j\leq k^*\\
            (z^*-t^*)+ (k^*-j^*) & k^* \leq j \leq p.
           \end{cases}
\end{equation}

We then prove the result by discussing the following four cases.

\textbf{Case 1}: assume $k^*-t^* \geq j^*-z^*$ and $k^*+t^* \leq j^*+z^*$. In this case, we have $t^*\leq z^*$ and $k^*-j^*\leq z^*-t^*$. Hence $\ell(j) \geq 0$ for all $j$.  Notice that $\ell(j)$ is an non-decreasing function of $j$ for $j\geq j^*$. Then for all $j$ such that 
\[
j^* + \min\{n\tau/4, (k^*-j^*)/4\} \leq j\leq j^* + \min\{n\tau/2, p/4\},
\]
we have 
\begin{align*}
\ell(j) & \geq \min\bigl\{\ell(j^* + \lceil n\tau/4\rceil), \ell(j^*+\lceil (k^*-j^*)/4\rceil)\bigr\} \\
&\geq \min\biggl\{(z^*-t^*) + \min\biggl\{\frac{n\tau}{2} - (k^*-j^*), k^*-j^*\biggr\}, z^*-t^*-\frac{1}{2}(k^*-j^*)\biggr\} \\
&\geq \min\biggl\{ \frac{n\tau}{2}, \frac{|z^*-t^*|+|j^*-k^*|}{4}\biggr\}.
\end{align*}
 Consequently, in this case, we have 
\begin{align*}
|\mathcal{J}| \geq \min\{\lfloor n\tau/4 \rfloor , p/8\}.
\end{align*}

\textbf{Case 2}: assume $k^*-t^* \geq j^*-z^*$, $k^*+t^* \geq j^*+z^*$ and $z^*\geq t^*$. In this case, $k^*-j^*\geq z^*-t^*\geq 0$.
We define $h^*$ to be the point such that $h^*-j^*=\lceil \frac{(k^*-j^*)-(z^*-t^*)}{2}\rceil$. Then $j^*\leq h^*\leq k^*$ and $\ell(h^*) \in\{0,1\}$. 

We discuss three sub-cases. \underline{Case 2a}: When $k^*-j^* \leq n\tau/4$, let $A=\{j:  \frac{k^*+h^*}{2} \leq j\leq \min(j^* + \frac{n\tau}{2}, p)\}$.
Then, for all $j\in A$, we have that 
\begin{align*}
\ell(j) &\geq \ell\biggl(\biggl\lceil \frac{h^*+k^*}{2}\biggl\rceil\biggr)\geq \ell(h^*) + 2\biggl\lceil \frac{k^*-h^*}{2}\biggr\rceil \\
&\geq \ell(h^*) +  \biggl\lfloor \frac{(z^*-t^*)+(k^*-j^*)}{2}\biggr\rfloor \geq \frac{(z^*-t^*)+(k^*-j^*)}{4}.
\end{align*}
and 
\begin{align*}
    |A|=\min\biggl(p,j^*+\frac{n\tau}{2}\biggr)-\frac{k^*+h^*}{2}&=\min\biggl\{p-\frac{k^*+h^*}{2},\frac{n\tau}{2}-\biggl(\frac{k^*+h^*}{2}-j^*\biggr)\biggr\}\\
    &\geq\min\biggl(p-k^* + \frac{k^*-h^*}{2}, n\tau/4 \biggr) \geq\min\biggl(\frac{p-j^*}{8},  n\tau/4 \biggr)\\
    &\geq \min(p/32, n\tau/4).
\end{align*}

\underline{Case 2b}: When $h^*-j^*\geq n\tau/8$, let  $A=\{j: j^*\leq j\leq\frac{j^*+h^*}{2}\}$.
Then, 
\[
|\ell(j)|\geq \biggl|\ell\biggl(\biggl\lfloor\frac{j^*+h^*}{2}\biggr\rfloor\biggr)\biggr| \geq 2 \biggl\{h^*-\biggl\lfloor\frac{j^*+h^*}{2}\biggr\rfloor\biggr\} - \ell(h^*) \geq \frac{n\tau}{16}, 
\]
and $|A|\geq \lfloor (h^*-j^*)/2\rfloor \geq \lfloor n\tau/16 \rfloor\geq n\tau/32$. 

\underline{Case 2c}: When $k^*-j^*>n\tau/4$ and $h^*-j^*<n\tau/8$, let  $A=\{j: h^*+\frac{n\tau}{16}\leq j^* \leq h^*+\frac{n\tau}{8}\}$.
Then,
\[
|\ell(j)|\geq\ell(h^*)+2\biggl\lceil \frac{n\tau}{16}\biggl\rceil\geq \frac{n\tau}{8},
\]
and $|A|\geq \lfloor n\tau/16 \rfloor\geq n\tau/32$.

Combining all subcases and noticing that $A\subseteq \mathcal{J}$, we have the desired result.

\textbf{Case 3}: assume $k^*-t^* \geq j^*-z^*$, $k^*+t^* \geq j^*+z^*$ and $t^* > z^*$. We define $h^*$ to be the point such that $h^*-j^*=\lfloor \frac{(k^*-j^*)+(t^*-z^*)}{2}\rfloor$. Observe that $\ell(h) \in\{0,-1\}$. In this case, $k^*-j^*\geq t^*-z^*\geq 0$. Let  $A=\{j: j^*-\min\{2j^*-k^*, n\tau/2\}\leq j\leq j^*+\min\{\frac{h^*-j^*}{2}, n\tau/2\}$. Since $\ell(j)$ is negative and increasing for $j\in [k^*-j^*, \frac{h^*+j^*}{2}]$, we have for all $j\in A$ that 
\[
|\ell(j)|\geq \biggl|\ell\biggl(\biggl\lfloor\frac{j^*+h^*}{2}\biggr\rfloor\biggr)\biggr| \geq 2 \biggl\{h^*-\biggl\lfloor\frac{j^*+h^*}{2}\biggr\rfloor\biggr\} + |\ell(h^*)|\geq \lfloor h^*-j^*\rfloor \geq \frac{k^*-j^*+t^*-z^*}{4}.
\] 
Finally, observe by the definition of $h^*$ and the condition $t^*>z^*$ that $h^*-j^* \geq \lfloor (k^*-j^* + 1)/2\rfloor \geq (k^*-j^*)/2$. Hence,
\[
\frac{h^*-j^*}{2} + (2j^*-k^*) \geq \frac{k^*-j^* + (2j^*-k^*)}{4} \geq \frac{j^*}{4} \geq \frac{p}{16}.
\]
Consequently, we have $|\mathcal{J}|\geq |A| \geq \min(p/16, n\tau/2)$. 

\textbf{Case 4}: assume $ k^*-t^* \leq j^*-z^*$ and $k^*+t^* \geq j^*+z^*$. In this case, we have $t^* \geq z^*$.  Let  $A=\{j: j^*-\min\{2j^*-k^*, n\tau/2\}\leq j\leq j^*+\min\{\frac{k^*-j^*}{2}, n\tau/2\}\}$. Noticing that $\ell(j)$ is negative and increasing for $j\in[k^*-j^*, k^*]$, we have 
\[
|\ell(j)|\geq  \biggl|\ell\biggl(\biggl\lfloor\frac{j+k}{2}\biggr\rfloor\biggr)\biggr|\geq t^*-z^*\geq \frac{t^*-z^*+k^*-j^*}{2}.
\]
We have $(k^*-j^*)/2 + (2j^*-k^*) \geq j^*/2 \geq p/8$. Hence $|\mathcal{J}|\geq |A|\geq \min\{p/8, n\tau/2\}$.
\end{proof}

We now state a version of Theorem~\ref{Thm:GeneralConsistency} that does not require the condition $n\tau \geq 2p$.

We also present a general result without the condition $n\tau\geq 2p$:
\begin{theorem}
\label{Thm:Consistency}
Suppose $X\sim P_{j^*,z^*,\mu_0,\mu_1}$ with $\mu_0-\mu_1\in\{-a,a\}^p$. Assuming 
\begin{equation}
\label{Eq:CondSignal}
a^2 \geq c\biggl\{\frac{\sqrt{p}+\log(2pn)}{n\tau\min(p,n\tau)} + \frac{pn\log(2pn)}{n^2\tau^2\min(p,n\tau)^2}\biggr\}.
\end{equation}
Suppose we are using the quadratic statistics defined in equation~\eqref{Eq:quadratic},
we have with probability at least $1-1/(2pn)$ that 
\[
|\hat z - z| + d_G(\hat j , j) \leq C\biggl\{ \frac{\sqrt{p}+\log(2pn)}{a^2\min(p,n\tau)} + \frac{\sqrt{pn \log(2pn)}}{a\min(p,n\tau)}\biggr\}.
\]
\end{theorem}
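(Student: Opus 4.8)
The plan is to re-run the proof of Theorem~\ref{Thm:GeneralConsistency} almost verbatim, the one structural change being that the signal-index set $\mathcal{J}_{\hat z,\hat j}(C_1)$ is replaced by the set furnished by Proposition~\ref{Prop:SetJ}. First I would keep $A=\mathcal{T}(\boldsymbol{\mu})$ and $B_{k^*,t^*}=\sum_{j\in\mathcal{J}_{k^*,t^*}}A_{j,t^*+d_G(k^*,j)}^2$ exactly as before, note that $Q_{k^*,t^*}+|\mathcal{J}_{k^*,t^*}|\sim\chi^2_{|\mathcal{J}_{k^*,t^*}|}(B_{k^*,t^*})$, and apply the tail bound of \citet[Lemma~8.1]{birge2001alternative} followed by a union bound over $(k^*,t^*)\in[p]\times[n-1]$. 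Using $|\mathcal{J}_{k^*,t^*}|\le p$ and $B_{k^*,t^*}\le n\|\theta\|_2^2/2=npa^2/2$, this gives, with probability at least $1-1/(2pn)$, the same upper bound
\[
B_{j^*,z^*}-B_{\hat j,\hat z}\lesssim \sqrt{p\log(2pn)}+a\sqrt{np\log(2pn)}+\log(2pn)
\]
as in~\eqref{Eq:BjUpper}. This half of the argument never used $n\tau\ge 2p$, so it carries over unchanged.

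The assumption $n\tau\ge 2p$ entered the earlier proof only in the lower bound on $B_{j^*,z^*}-B_{\hat j,\hat z}$, where it ensured both that every coordinate's change falls inside the window and that the CUSUM peak $A_{j,z^*+d_G(j,j^*)}$ is of order $\theta_j\sqrt{n\tau}$. Without it, far-away coordinates may have their time of spread outside $[1,n]$ or too near the boundary, so the peak bound~\eqref{Eq:Apeak} can fail for them. The remedy is to restrict attention to coordinates with $d_G(j,j^*)\le n\tau/2$. Reducing to the case $z^*\le n/2$ by the time-reversal symmetry of the CUSUM, such $j$ satisfy $z^*+d_G(j,j^*)\le 3n/4<n$ and $n-z^*-d_G(j,j^*)\ge n/4$, while $z^*+d_G(j,j^*)\ge z^*\ge n\tau$; hence the peak bound $\theta_j A_{j,z^*+d_G(j,j^*)}\ge \theta_j^2\sqrt{n\tau}/2$ still holds. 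Applying Proposition~\ref{Prop:SetJ} with $(t^*,k^*)=(\hat z,\hat j)$ then supplies a set $\mathcal{J}$ of cardinality at least $\min(p,n\tau)/32$ on which, simultaneously, $d_G(j,j^*)\le n\tau/2$ and $|\ell(j)|=|z^*+d_G(j,j^*)-\hat z-d_G(j,\hat j)|\ge \min\{n\tau/16,\,(|z^*-\hat z|+d_G(j^*,\hat j))/4\}$.

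With this set in hand the lower bound mirrors the earlier one: combining the peak bound with Lemma~7 of \citet{wang2016highdimensional} through~\eqref{Eq:Aj_diffGeneral}, each $j\in\mathcal{J}$ contributes
\[
A_{j,z^*+d_G(j,j^*)}^2-A_{j,\hat z+d_G(j,\hat j)}^2\gtrsim a^2\min\Bigl\{\frac{n\tau}{16},\,\frac{|z^*-\hat z|+d_G(j^*,\hat j)}{4}\Bigr\},
\]
and summing over $\mathcal{J}$ yields
\[
B_{j^*,z^*}-B_{\hat j,\hat z}\gtrsim a^2\min(p,n\tau)\,\min\bigl\{n\tau,\,|z^*-\hat z|+d_G(j^*,\hat j)\bigr\}.
\]
Chaining this with the upper bound produces a basic inequality of exactly the form~\eqref{Eq:BasicInequalityGeneral}, with $\min(p,n\tau)/32$ playing the role of $m$ and $1/4$ that of $C_1$. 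I would then use Condition~\eqref{Eq:CondSignal} to discard the $n\tau$ branch of the minimum, just as in the passage after~\eqref{Eq:BasicInequalityGeneral}: the condition is calibrated so that $a^2\min(p,n\tau)\cdot n\tau$ dominates the right-hand side, forcing the minimum to be attained at $|z^*-\hat z|+d_G(j^*,\hat j)$. Solving the resulting inequality for $|z^*-\hat z|+d_G(j^*,\hat j)$ gives the advertised bound.

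I expect the only real difficulty to sit inside Proposition~\ref{Prop:SetJ}, which is where the genuinely new work lives: one must produce $\gtrsim\min(p,n\tau)$ coordinates that are at once close to $j^*$ (needed for the peak bound) and well-separated in their time of spread (needed for the gap bound), and this forces the delicate case analysis on the relative positions of $z^*,t^*,j^*,k^*$ carried out there, together with the mild regularity assumptions $n\tau\ge 16$ and $p\ge 4$. Granting that proposition, the present theorem is a routine adaptation of the Theorem~\ref{Thm:GeneralConsistency} argument and I anticipate no further obstacles.
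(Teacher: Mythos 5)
Your proposal matches the paper's own proof essentially step for step: the upper bound \eqref{Eq:BjUpper} is reused unchanged, and the lower bound is rebuilt over the set $\mathcal{J}$ of cardinality at least $\min(p,n\tau)/32$ supplied by Proposition~\ref{Prop:SetJ}, whose elements satisfy both $d_G(j,j^*)\leq n\tau/2$ (so the peak bound survives) and the spread-time gap of order $\min\{n\tau,\,|z^*-\hat z|+d_G(j^*,\hat j)\}$. No further comment is needed.
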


\begin{proof}
Following the proof of Theorem~\ref{Thm:GeneralConsistency} and  Proposition~\ref{Prop:SetJ}, there exists $\mathcal{J}\subset[p]$ such that $|\mathcal{J}|\geq \min(p,n\tau)/32$, and for each $j\in\mathcal{J}$, we have 
\[
|j-j^*|\leq \frac{n\tau}{2}\quad\text{and}\quad |z^*+d_G(j,j^*)-(\hat z +d_G(j,\hat j))|\geq \min\biggl(\frac{n\tau}{16}, \frac{|z^*-\hat z|+d_G(j^*,\hat j)}{4}\biggr).
\]

Combining equation~\eqref{Eq:Apeak} and Proposition~\ref{Prop:SetJ}, we have that
\begin{align*}
A_{j,z^*+d_G(j,j^*)}-A_{j,\hat z +d_G(j,\hat j)} &\geq \frac{2\theta_j}{3\sqrt{6 n\tau}}\min(|z^*+d_G(j,j^*)-(\hat z +d_G(j,\hat j))|, n\tau/2)\\
&\geq \frac{2\theta_j}{3\sqrt{6 n\tau}}\min\biggl(\frac{|z^*-\hat z|+d_G(j^*,\hat j)}{4}, \frac{n\tau}{16}\biggr).
\end{align*}
Then
\begin{align}
B_{j^*,z^*}-B_{\hat j,\hat z} & \geq \sum_{j\in\mathcal{J}} A_{j,z^*+d_G(j,j^*)}(A_{j,z^*+d_G(j,j^*)}-A_{j,\hat z+d_G(j,\hat j)})\nonumber\\
 &\geq \frac{a^2\min(p, n\tau)}{96\sqrt{6}}\min\biggl(\frac{|z^*-\hat z|+d_G(j^*,\hat j)}{4}, \frac{n\tau}{16}\biggr).\label{Eq:BjLower1}
\end{align}
Combining~\eqref{Eq:BjUpper} in the proof of Theorem~\ref{Thm:GeneralConsistency} and~\eqref{Eq:BjLower1}, and choosing $\delta = 1/(2pn)$, we have with probability at least $1-1/(2pn)$ that 
\begin{equation}
\label{Eq:BasicInequality}
\frac{a^2\min(p, n\tau)}{96\sqrt{6}}\min\biggl(\frac{|z^*-\hat z|+d_G(j^*,\hat j)}{4}, \frac{n\tau}{16}\biggr) \leq 4\sqrt{p+2npa^2\log(2pn)}+8\log(2pn).
\end{equation}
When $c$ is suffiicently large, we have from~\eqref{Eq:CondSignal} that the minimum on the left-hand side of~\eqref{Eq:BasicInequality} is necessarily achieved by the first term. Consequently, we derive from~\eqref{Eq:BasicInequality} that with probability at least $1-1/(2pn)$, we have 
\[
|z^*-\hat z| + d_G(j^*,\hat j) \leq C\biggl\{ \frac{\sqrt{p}+\log(2pn)}{a^2\min(p,n\tau)} + \frac{\sqrt{pn \log(2pn)}}{a\min(p,n\tau)}\biggr\},
\]
as desired.
\end{proof}

With the addition assumption that $n\tau \geq 2p$, for $\tau = \min\{z^*/n, 1-z^*/n\}$, we may establish the following improved version of Proposition~\ref{Prop:SetJ} that can be used to prove Theorem~\ref{Thm:GeneralConsistency}.
\begin{prop}
\label{Prop:SetJ1}
Let $G = C_p$ be a $p$-cycle graph and $\tau = \min\{z^*/n, 1-z^*/n\}$. Assuming that $n\tau\geq 2p$, the following set
\[
   \mathcal{J} := \biggl\{j: \mathrm{sgn}(z^*-t^*)\ell(j)\geq \frac{|z^*-t^*|+d_G(j^*,k^*)}{4}\biggr\}           
\]
has cardinality at least $p/8$.
\end{prop}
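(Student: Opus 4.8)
The plan is to reduce to a single sign of $z^*-t^*$ by symmetry and then read off the qualifying vertices directly from the explicit piecewise formula~\eqref{Eq:ell} already established in the proof of Proposition~\ref{Prop:SetJ}. First I would observe that the defining inequality of $\mathcal{J}$ is invariant under the simultaneous swap $(z^*,j^*)\leftrightarrow(t^*,k^*)$: this sends $\ell(j)\mapsto -\ell(j)$ and $\mathrm{sgn}(z^*-t^*)\mapsto-\mathrm{sgn}(z^*-t^*)$, while leaving $|z^*-t^*|$ and $d_G(j^*,k^*)$ fixed, so the product $\mathrm{sgn}(z^*-t^*)\ell(j)$ and the threshold are both unchanged and $\mathcal{J}$ is literally the same set. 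Hence I may assume $\Delta := z^*-t^*\geq 0$, so that membership in $\mathcal{J}$ becomes $\ell(j)\geq \tfrac14(\Delta+D)$, where $D:=d_G(j^*,k^*)$. Exactly as in Proposition~\ref{Prop:SetJ}, vertex-transitivity and reflection symmetry of $C_p$ let me place $j^*=\lceil p/2\rceil$ and $k^*=j^*+D$ with $0\leq D\leq\lfloor p/2\rfloor$, after which $d_G(j,j^*)=|j-j^*|$ and formula~\eqref{Eq:ell} applies verbatim. The hypothesis $n\tau\geq 2p$ enters only through $d_G(j,j^*)\leq\lfloor p/2\rfloor\leq n\tau/4$, which is precisely what permits dropping both the distance constraint and the $\min(n\tau/16,\cdot)$ truncation present in Proposition~\ref{Prop:SetJ}.

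Next I would exhibit two disjoint blocks of qualifying vertices from~\eqref{Eq:ell}. On the block $k^*\leq j\leq p$ one has $\ell(j)=\Delta+D\geq\tfrac14(\Delta+D)$, so the whole block, of size $p-k^*+1=\lfloor p/2\rfloor-D+1$, lies in $\mathcal{J}$. On the block $1\leq j\leq D$ one has $\ell(j)=\Delta+D-2j$, and $\ell(j)\geq\tfrac14(\Delta+D)$ is equivalent to $j\leq\tfrac38(\Delta+D)$; since $\Delta\geq 0$ this contributes at least $\min\{D,\lfloor\tfrac38(\Delta+D)\rfloor\}$ further vertices. These two blocks are disjoint because $D<\lceil p/2\rceil\leq k^*$.

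To finish I would combine the counts. If $\Delta\geq\tfrac53 D$ then the global minimum $\Delta-D$ of $\ell$ already satisfies $\Delta-D\geq\tfrac14(\Delta+D)$, so the condition holds on all of $[p]$ and $|\mathcal{J}|=p$. Otherwise $\tfrac38(\Delta+D)\leq D$, so the second block contributes at least $\lfloor\tfrac38 D\rfloor$, and adding the first block yields
\[
|\mathcal{J}|\;\geq\;\bigl(\lfloor p/2\rfloor - D + 1\bigr) + \lfloor 3D/8\rfloor\;\geq\;\lfloor p/2\rfloor - \tfrac{5D}{8}\;\geq\;\frac{p}{8},
\]
where the last inequality uses $D\leq\lfloor p/2\rfloor\leq p/2$.

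The computation itself is routine once~\eqref{Eq:ell} is in hand; the two things to get right are the symmetry reduction, which halves the casework of Proposition~\ref{Prop:SetJ} by confining attention to $\Delta\geq 0$, and the bookkeeping of the floor functions so that the clean constant $p/8$ survives (for $p$ above a small absolute constant, with any remaining small cases checked directly). The only genuinely delicate point I anticipate is confirming that formula~\eqref{Eq:ell}, which encodes the wrap-around behaviour of $d_G(\cdot,k^*)$ on the cycle, is applied on the correct index ranges; this is already justified in the proof of Proposition~\ref{Prop:SetJ} and needs no new argument here.
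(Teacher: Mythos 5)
Your argument is correct (up to trivially small $p$, which you flag), but it is a genuinely different and more economical route than the paper's. The paper proves Proposition~\ref{Prop:SetJ1} by running the same four-way case analysis as in Proposition~\ref{Prop:SetJ} (split according to the signs of $k^*-t^*-(j^*-z^*)$ and $k^*+t^*-(j^*+z^*)$), in each case introducing an auxiliary crossing point $h^*$ where $\ell$ changes sign and then exploiting monotonicity of $\ell$ on a block adjacent to $h^*$. Your swap symmetry $(z^*,j^*)\leftrightarrow(t^*,k^*)$, which fixes the product $\mathrm{sgn}(z^*-t^*)\ell(j)$ and the threshold, is not used in the paper and immediately disposes of the paper's Cases 3 and 4; the remaining situation is then settled by a single count over two explicit blocks ($[k^*,p]$, where $\ell\equiv\Delta+D$, and an initial segment of $[1,D]$) together with the dichotomy $3\Delta\gtrless 5D$, with no need for $h^*$. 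What the paper's version buys is a structure exactly parallel to Proposition~\ref{Prop:SetJ} and slightly better constants in some cases (e.g.\ $p/4$ in its Cases 1 and 4); what yours buys is roughly half the length and a cleaner bookkeeping of the floor functions. Two caveats worth recording if this were written up: (i) the reduction to $\Delta\geq 0$ tacitly assumes $z^*\neq t^*$ (when $z^*=t^*$ and $j^*\neq k^*$ the stated condition $\mathrm{sgn}(0)\,\ell(j)\geq D/4$ is vacuously unsatisfiable, an edge case the paper's statement also leaves unaddressed); and (ii) the swap is legitimate only because, as you correctly observe, the counting argument never actually invokes the hypothesis $n\tau\geq 2p$, which is asymmetric in $z^*$ and $t^*$ --- had the proof needed it, the symmetry reduction would not be free.
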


\begin{proof}
Following the proof of Proposition~\ref{Prop:SetJ}, we may assume without loss of genrality that $j^* = \lceil p/2 \rceil$ and $k^*\geq j^*$, which imlpies that $\ell(j)$ takes the form given in~\eqref{Eq:ell}. We then prove the result by considering four cases as in the proof of Proposition~\ref{Prop:SetJ}.

\textbf{Case 1}: assume $k^*-t^* \geq j^*-z^*$ and $k^*+t^* \leq j^*+z^*$. In this case, we have $t^*\leq z^*$ and $k^*-j^*\leq z^*-t^*$. Hence $\ell(j) \geq 0$ for all $j$.  Notice that $\ell(j)$ is an non-decreasing function of $j$ for $j\geq j^*$. Then, for all $j$ such that $j^* +  (k^*-j^*)/4 \leq j\leq p$, we have 
\begin{equation*}
\ell(j) \geq \ell\bigl(j^*+\lceil (k^*-j^*)/4\rceil\bigr) \geq  z^*-t^*+\frac{1}{2}(k^* -j^*) \geq \frac{z^*-t^*+k^*-j^*}{4}.    
\end{equation*}
Consequently, in this case, $|\mathcal{J}| \geq p - j^* - \lceil (k^*-j^*)/4\rceil + 1 \geq p/4$ as required.

\textbf{Case 2}: assume $k^*-t^* \geq j^*-z^*$, $k^*+t^* \geq j^*+z^*$ and $z^*\geq t^*$. In this case, $k^*-j^*\geq z^*-t^*\geq 0$.
We define $h^*:=j^*+\lceil \frac{(k^*-j^*)-(z^*-t^*)}{2}\rceil$, and observe that $j^*\leq h^*\leq k^*$, $k^*-h^*\geq (k^*-j^*)/2$ and $\ell(h^*) \in\{0,1\}$, and that $\ell(j)$ is increasing for $j \in [h^*, p]$.  Then, for all $j$ such that $(k^*+h^*)/2 \leq j\leq p$, we have 
\begin{align*}
\ell(j) &\geq \ell\biggl(\biggl\lceil \frac{h^*+k^*}{2}\biggl\rceil\biggr) = \ell(h^*) + 2\biggl\lceil \frac{k^*-h^*}{2}\biggr\rceil \geq \ell(h^*) + (k^* - h^*) \\
&\geq \ell(h^*) +   \frac{z^*-t^*+k^*-j^* - \ell(h^*)}{2} \geq \frac{z^*-t^*+k^*-j^*}{2},
\end{align*}
where in the penultimate inequality, we have used the property that $\lceil\frac{(k^*-j^*) - (z^*-t^*)}{2}\rceil = \frac{(k^*-j^*) - (z^*-t^*) + \ell(h^*)}{2}$. Consequently, in this case, $|\mathcal{J}| \geq p-\lceil(k^*+h^*)/2\rceil+1$. The right-hand side is a decreasing function of $k^*$. Hence, using the fact that $k^*\leq p$ and $h^*\leq (j^*+k^*)/2$, we have $|\mathcal{J}|\geq p/8$ as desired.

\textbf{Case 3}: assume $k^*-t^* \geq j^*-z^*$, $k^*+t^* \geq j^*+z^*$ and $t^* > z^*$. We define $h^*:=j^*+\lfloor \frac{(k^*-j^*)+(t^*-z^*)}{2}\rfloor$. Observe that $\ell(h^*) \in\{0,-1\}$ and $h^* = \frac{(k^*+j^*)+(t^*-z^*)}{2} - \frac{\ell(h^*)}{2}$. In this case, $k^*-j^*\geq t^*-z^*\geq 0$. For all $j$ such that $k^*-j^* \leq j\leq \frac{h^*+j^*}{2}$, $\ell(j)$ is negative and increasing, satisfying
\begin{align*}
-\ell(j)&\geq -\ell\biggl(\biggl\lfloor\frac{j^*+h^*}{2}\biggr\rfloor\biggr) = 2 \biggl\{h^*-\biggl\lfloor\frac{j^*+h^*}{2}\biggr\rfloor\biggr\} -\ell(h^*)\\
&\geq  h^*-j^* - \ell(h^*) \geq \frac{k^*-j^*+t^*-z^*}{2}.
\end{align*}
Finally, observe by the definition of $h^*$ and the condition $t^*>z^*$ that $h^*-j^* \geq \lfloor (k^*-j^* + 1)/2\rfloor \geq (k^*-j^*)/2$. Hence, we have 
\[
|\mathcal{J}| \geq \frac{h^*+j^*}{2} - (k^*-j^*) \geq \frac{h^*-j^*}{2} + (2j^*-k^*) \geq \frac{k^*-j^* + (2j^*-k^*)}{4} \geq \frac{j^*}{4} \geq \frac{p}{8}.
\]

\textbf{Case 4}: assume $ k^*-t^* \leq j^*-z^*$ and $k^*+t^* \geq j^*+z^*$. In this case, we have $t^* \geq z^*$. For all $j$ such that $k^*-j^*\leq j\leq \frac{k^*+j^*}{2}$, we note that  $\ell(j)$ is negative and increasing, satisfying
\[
-\ell(j)\geq  -\ell\biggl(\biggl\lfloor\frac{j^*+k^*}{2}\biggr\rfloor\biggr)\geq t^*-z^*\geq \frac{t^*-z^*+k^*-j^*}{2}.
\]
Hence, We have $|\mathcal{J}|\geq (k^*+j^*)/2 - (k^* - j^*) \geq j^*/2 \geq p/4$, completing the proof.
\end{proof}

For the case when we are using the linear statistics, we provide the following result of the difference between the sum of $E_{j,z^*+d_G(j,j^*)}$ and $E_{j,t^*+d_G(j,k^*)}$ for coordinates in set $\mathcal{J}_{t^*,k^*}(C_1)$.
\begin{prop}
\label{Prop:Variance}
Fix $z^*\in[n-1]$ and $j^*\in[p]$. If $n\tau \geq 2p$, then there exists a universal constant $C > 0$ and an event with probability at least $1-1/(pn)$, such that on this event, for all $t^*\in[n-1]$, $k^*\in[p]$ and $\mathcal{J}_{t^*,k^*}(C_1)\subseteq [p]$ defined in~\eqref{Eq:mathcalJ}, we have

\[
\sum_{j\in\mathcal{J}_{t^*,k^*}(C_1)}(E_{j,z^*+d_G(j,j^*)}-E_{j,t^*+d_G(j,k^*)})\leq C \sqrt{\frac{|\mathcal{J}_{t^*,k^*}(C_1)|(|z^*-t^*|+d_G(k^*,j^*))\log(pn)}{n\tau}}.
\]
\end{prop}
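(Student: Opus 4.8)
The plan is to exploit that the noise CUSUM matrix $E = \mathcal{T}(W)$ has independent rows, so that for each fixed pair $(t^*,k^*)$ the quantity
\[
S_{t^*,k^*} := \sum_{j\in\mathcal{J}_{t^*,k^*}(C_1)} \bigl(E_{j,z^*+d_G(j,j^*)} - E_{j,t^*+d_G(j,k^*)}\bigr)
\]
is a sum of independent mean-zero Gaussians, hence itself Gaussian. I would therefore first bound its variance, then apply a Gaussian tail inequality, and finally take a union bound over the at most $(n-1)p \le pn$ choices of $(t^*,k^*)$ (and over both signs, since the stated one-sided bound and its negation follow from the symmetry of the Gaussian, and the application in Theorem~\ref{Thm:LinearStatConsistency} uses the opposite sign). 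Writing $s_j := z^*+d_G(j,j^*)$ and $u_j := t^*+d_G(j,k^*)$, independence across rows gives $\mathrm{Var}(S_{t^*,k^*}) = \sum_{j} \mathrm{Var}(E_{j,s_j}-E_{j,u_j})$, so the whole argument reduces to a per-row variance estimate.

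For the per-row term I would represent $E_{j,t} = \langle a^{(t)}, W_j\rangle$, where $a^{(t)}\in\mathbb{R}^n$ is the unit-norm CUSUM contrast with entries $-\sqrt{(n-t)/(nt)}$ for $r\le t$ and $\sqrt{t/(n(n-t))}$ for $r>t$. Then $\mathrm{Var}(E_{j,s}-E_{j,u}) = \|a^{(s)}-a^{(u)}\|_2^2 = 2\{1-\langle a^{(s)},a^{(u)}\rangle\}$, and a direct computation of the inner product gives $\langle a^{(s)},a^{(u)}\rangle = \sqrt{s(n-u)/\{u(n-s)\}}$ for $s\le u$ (the discrete analogue of the standardized Brownian-bridge covariance). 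Using $1-\sqrt{x}\le 1-x$ for $x\in[0,1]$ then yields the clean bound
\[
\mathrm{Var}(E_{j,s_j}-E_{j,u_j}) \le \frac{2n\,|s_j-u_j|}{\max(s_j,u_j)\,\{n-\min(s_j,u_j)\}}.
\]

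The crux of the proof is the control of this denominator, and this is where $n\tau\ge 2p$ enters. The key point I would emphasize is that one must \emph{not} bound $\max(s_j,u_j)$ and $n-\min(s_j,u_j)$ separately — doing so loses a spurious factor of $\tau$ — but instead keep them together via $\max(s_j,u_j)\{n-\min(s_j,u_j)\}\ge s_j(n-s_j)$, which holds because the anchor time $s_j$ always lies between the two times. Since $z^*\ge n\tau$ and $z^*\le n(1-\tau)$ while $d_G(j,j^*)\le p\le n\tau/2$, the time $s_j$ is confined to the bulk interval $[n\tau,\,n-n\tau/2]$, on which $s\mapsto s(n-s)$ is bounded below by $\tfrac{3}{8}n^2\tau \ge \tfrac14 n^2\tau$. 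Combined with the triangle-inequality bound $|s_j-u_j|\le |z^*-t^*|+d_G(j^*,k^*)$, this gives $\mathrm{Var}(E_{j,s_j}-E_{j,u_j})\le 8\{|z^*-t^*|+d_G(j^*,k^*)\}/(n\tau)$ for every $j$.

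Summing over the independent rows then bounds $\mathrm{Var}(S_{t^*,k^*})$ by $8|\mathcal{J}_{t^*,k^*}(C_1)|\{|z^*-t^*|+d_G(j^*,k^*)\}/(n\tau)$. Applying $\mathbb{P}(S_{t^*,k^*}>\sqrt{2\mathrm{Var}\,\log(1/\delta)})\le\delta$ with $\delta=(pn)^{-2}$ and union-bounding over the $\le pn$ pairs $(t^*,k^*)$ produces the claimed inequality with a universal constant $C$ on an event of probability at least $1-1/(pn)$. I expect the main obstacle to be precisely the denominator estimate above: the naive separate bounds yield only $n\tau^2$ in place of $n\tau$, which would propagate an extra factor $1/\tau$ into the conclusion of Theorem~\ref{Thm:LinearStatConsistency}; obtaining the correct $n\tau$ rate hinges on recognizing that the anchor time $s_j$ stays in the bulk of $[1,n]$.
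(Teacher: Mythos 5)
Your proof is correct, and while the outer skeleton (independence across rows, per-row variance bound, Gaussian tail, union bound over the $\le pn$ pairs $(t^*,k^*)$) matches the paper's, the heart of the argument --- the per-row variance estimate --- is obtained by a genuinely different and cleaner route. The paper decomposes $E_{j,z_j}-E_{j,t_j}$ into a term involving $\sum_{r=t_j+1}^{z_j}W_{j,r}$ plus a term coming from the change in the normalising factor, controls the latter via the mean value theorem, and as a consequence must first dispose of the regime $|z^*-t^*|+d_G(j^*,k^*)\ge n\tau/2$ by a separate crude bound $2\sqrt{p\log(pn)}$ on each aggregated noise term. Your exact computation $\langle a^{(s)},a^{(u)}\rangle=\sqrt{s(n-u)/\{u(n-s)\}}$ (which I checked: the three blocks $r\le s$, $s<r\le u$, $r>u$ contribute $(n-s)-(u-s)+u=n$ times the common factor) gives the identity $\mathbb{V}(E_{j,s}-E_{j,u})=2\{1-\sqrt{s(n-u)/(u(n-s))}\}$ and hence, via $1-\sqrt{x}\le 1-x$, the bound $2n|s-u|/\{\max(s,u)(n-\min(s,u))\}$ in one stroke, with no case split and no mean value theorem. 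Your subsequent observation that $\max(s_j,u_j)\{n-\min(s_j,u_j)\}\ge s_j(n-s_j)\ge \tfrac38 n^2\tau$ (using $s_j\in[n\tau,\,n-n\tau/2]$, which is exactly where $n\tau\ge 2p$ enters, just as in the paper) correctly yields the $1/(n\tau)$ rather than $1/(n\tau^2)$ rate, and the triangle inequality $|s_j-u_j|\le|z^*-t^*|+d_G(j^*,k^*)$ finishes the per-row bound. The only cosmetic caveats: the proposition as stated is one-sided, so the factor-of-two in the union bound from covering both signs is unnecessary (or absorb it by taking $\delta=(pn)^{-2}/2$), and like the paper you implicitly assume $t^*+d_G(j,k^*)\le n-1$ so that $E_{j,u_j}$ is defined; neither affects correctness.
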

\begin{proof}
First, we claim that if $|z^*-t^*| + d_G(k^*,j^*) \geq n\tau/2$, then the conclusion holds trivially. To see this, we note that $\sum_{j\in[p]} E_{j, z^*+d_G(j,j^*)} \sim N(0, p)$ and $\sum_{j\in[p]} E_{j, t^*+d_G(j,k^*)} \sim N(0, p)$. Taking a union bound over $t^*$ and $k^*$, there is an event with probability at least $1-1/(pn)$ such that 
\[
 \max\biggl\{\sum_{j\in \mathcal{J}_{t^*,k^*}(C_1)} E_{j, z^*+d_G(j,j^*)}, \max_{t^*\in[n-1], k^*\in[p]} \sum_{j\in\mathcal{J}_{t^*,k^*}(C_1)} E_{j, t^*+d_G(j,k^*)}\biggr\} \leq 2\sqrt{p\log(pn)}.
\]
So it suffices to take $C = 2\sqrt{2}$ for the desired conclusion to hold. Hence, we may assume without loss of generality that $|z^*-t^*| + d_G(k^*,j^*) < n\tau/2$.

We control $\sum_{j\in \mathcal{J}_{t^*,k^*}(C_1)} (E_{j,z^*+d_G(j,j^*)} - E_{j,t^*+d_G(j,k^*)})$ for fixed $t^*\in[n-1]$, $k^*\in[p]$. For simplicity of notation, we denote $z_j := z^*+d_G(j,j^*)$ and $t_j: = t^*+d_G(j,k^*)$. Note that $\sum_{j\in \mathcal{J}_{t^*,k^*}(C_1)} (E_{j,z^*+d_G(j,j^*)} - E_{j,t^*+d_G(j,k^*)})$ is a sum of $|\mathcal{J}_{t^*,k^*}(C_1)|$ independent normal random variables. Hence, we start by controlling the variance of each summand. We consider first the case where $t_j \leq z_j$. From the definition of the CUSUM transformation, we can write
\begin{align}
     E_{j,z_j}-E_{j,t_j}
    &= \sqrt{\frac{n}{z_j(n-z_j)}}\bigg(\frac{z_j}{n}\sum_{r=1}^n  W_{j,r}-\sum_{r=1}^{z_j} W_{j,r}\bigg)\nonumber\\
    &\hspace{4cm} -\sqrt{\frac{n}{t_j(n-t_j)}}\bigg(\frac{t_j}{n}\sum_{r=1}^n W_{j,r}-\sum_{r=1}^{t_j} W_{j,r}\bigg)\nonumber\\
    &=  \sqrt{\frac{n}{z_j(n-z_j)}}\bigg(\frac{z_j-t_j}{n}\sum_{r=1}^n W_{j,r}-\sum_{r=t_j+1}^{z_j} W_{j,r}\bigg)\nonumber\\
    &\hspace{2cm}+\bigg(\sqrt{\frac{n}{z_j(n-z_j)}}-\sqrt{\frac{n}{t_j(n-t_j)}}\bigg) \bigg(\frac{t_j}{n}\sum_{r=1}^n W_{j,r}-\sum_{r=1}^{t_j}  W_{j,r}\bigg).
\end{align}
By the mean value theorem, there exists $\xi\in[t_j,z_j]$, such that
\[
\bigg(\sqrt{\frac{n}{z_j(n-z_j)}}-\sqrt{\frac{n}{t_j(n-t_j)}}\bigg) \leq (z_j-t_j)\bigg|\frac{\xi}{n}-\frac{1}{2}\bigg|\bigg(\frac{n}{\xi(n-\xi)}\bigg)^{3/2}\leq \frac{\sqrt{2}(z_j-t_j)}{\min(\xi,n-\xi)^{3/2}}
\]
Also, we observe that:
\[
\frac{t_j}{n}\sum_{r=1}^n W_{j,r}-\sum_{r=1}^{t_j}  W_{j,r}=\sum_{r=t+1}^n W_{j,r}-\frac{n-t_j}{n} \sum_{r=1}^n W_{j,r}.
\]
Since $\sum_{r=1}^n W_{j,r}$ and $\sum_{r=t_j+1}^{z_j} W_{j,r}$ are positively corrected with each other, we have 
\begin{align*}
    \mathbbm{V}( E_{j,z_j}-E_{j,t_j})&\leq \frac{2n}{z_j(n-z_j)} \bigg(\frac{(z_j-t_j)^2}{n}+z_j-t_j\bigg)\\
    &\hspace{2cm}+\frac{4(z_j-t_j)^2}{\min(\xi,n-\xi)^3}\min\bigg(\frac{t_j^2}{n}+t_j,n-t_j+\frac{(n-t_j)^2}{n}\bigg)\\
    &\leq 4(z_j-t_j)\bigg(\frac{1}{z_j}+\frac{1}{n-z_j}\bigg)+\frac{8(z_j-t_j)^2}{\min(t_j,n-z_j)^2}\max\bigg(1,\frac{n-t_j}{n-z_j}\bigg)
\end{align*}
Since $n\tau \geq 2p$, we have $|z_j - z^*| = d_G(j - j^*) \leq p \leq n\tau/2$ and consequently $n\tau \leq z_j\leq n-n\tau/2$. Also, by~\eqref{Eq:ell}, we have $z_j-t_j<|z^*-t^*| + d_G(k^*,j^*)<n\tau/2$, so $n\tau/2 \leq t_j \leq n-n\tau$. Thus, for some universal constant $C>0$, we have 
\begin{align}
\label{Eq:Var_j}
\mathbbm{V}( E_{j,z_j}-E_{j,t_j})&\leq \frac{8(z_j-t_j)}{n\tau} + \frac{4n\tau(z_j-t_j)}{(n\tau/2)^2}\biggl(1+\frac{n\tau/2}{n\tau/2}\biggr) \nonumber\\
&\leq \frac{C(z_j-t_j)}{n\tau}\leq \frac{C(|z^*-t^*|+d_G(k^*,j^*))}{n\tau}.
\end{align}
If $z_j > t_j$, a symmetric argument will show that the same variance bound as in~\eqref{Eq:Var_j} holds. Therefore, 
\[
\mathbbm{V}\bigg(\sum_{j\in\mathcal{J}_{t^*,k^*}(C_1)} (E_{j,z_j}-E_{j,t_j})\bigg)
\leq \frac{C|\mathcal{J}_{t^*,k^*}(C_1)|(|z^*-t^*|+d_G(k^*,j^*))}{n\tau}
\]
Then, for a fixed $t^*$ and $k^*$, we have that 
\begin{align*}
 &\mathbb{P}\bigg(\sum_{j\in\mathcal{J}_{t^*,k^*}(C_1)} (E_{j,z^*+d_G(j,j^*)}-E_{j,t^*+d_G(j,k^*)})\geq \\
 &\hspace{4cm} \sqrt{\frac{4C|\mathcal{J}_{t^*,k^*}(C_1)|(|z^*-t^*|+d_G(k^*,j^*))\log (pn)}{n\tau}}\bigg)\leq \frac{1}{(pn)^2}.
\end{align*}
The desired conclusion then follows by taking a union bound over $t^*\in[n-1]$ and $k^*\in[p]$. 
\end{proof}

\bibliographystyle{custom}
\bibliography{biblio}

\end{document}